\newcommand{\ket}[1]{|#1\rangle}         % |a>
\newcommand{\bra}[1]{\langle#1|}         % <a|
\newcommand{\cpm}{\operatorname{CP}}     % cp map
\newcommand{\cptp}{\operatorname{CPTP}}  % cptp map
\DeclareMathOperator{\id}{\mathds{1}}    % identity
\DeclareMathOperator{\Tr}{Tr}            % trace 
\DeclareMathOperator{\Pa}{Pa}            % parents
\DeclareMathOperator{\Ch}{Ch}            % children
\DeclareMathOperator{\Anc}{Anc}          % ancestors
\DeclareMathOperator{\CoPa}{CPa}         % common parents
\newtheorem{theorem}{Theorem}
\newtheorem{corollary}[theorem]{Corollary}
\newtheorem{lemma}[theorem]{Lemma}
\theoremstyle{definition}
\newtheorem{definition}{Definition}
\newtheorem{conjecture}{Conjecture}
\newtheorem*{game*}{Causal Game}
\begin{document}
\title{Admissible Causal Structures and Correlations}
\author{Eleftherios-Ermis Tselentis}
	\affiliation{Institute for Quantum Optics and Quantum Information (IQOQI-Vienna), Austrian Academy of Sciences, 1090 Vienna, Austria}
	\affiliation{Faculty of Physics, University of Vienna, 1090 Vienna, Austria}
\author{\"Amin Baumeler}
	\affiliation{Facolt\`a di scienze informatiche, Universit\`a della Svizzera italiana, 6900 Lugano, Switzerland}
	\affiliation{Facolt\`a indipendente di Gandria, 6978 Gandria, Switzerland}

\begin{abstract}
	\noindent
	It is well-known that if one assumes quantum theory to hold locally, then processes with indefinite causal order and cyclic causal structures become feasible. 
	Here, we study {\em qualitative\/} limitations on causal structures and correlations imposed by local quantum theory.
	For one, we find a {\em necessary graph-theoretic criterion\/}---the ``siblings-on-cycles'' property---for a causal structure to be admissible:
	Only such causal structures admit a realization consistent with local quantum theory.
	We conjecture that this property is moreover {\em sufficient.}
	This conjecture is motivated by an explicit construction of quantum causal models, and supported by numerical calculations.
	We show that these causal models, in a restricted setting, are indeed consistent.
	For another, we identify two sets of causal structures that, in the classical-deterministic case, forbid and give rise to non-causal correlations respectively. 
\end{abstract}

\maketitle

\section{Introduction}
At heart of Einstein's equivalence principle is the impossibility to detect the gravitational field via local experiments.\footnote{
	This means that if a non-gravitational experiment is carried out in a sufficiently small space-time region~$\mathcal R$ with a~gravitational field,
	then for any space-time region~$\mathcal R'$ {\em free\/} of gravitation there exists a suitable reference frame where the same experimental procedure yields the identical experimental data.
	This statement and its variations are discussed in Ref.~\cite{dicasola2015}.}
For general relativity, this principle dictates that physics in sufficiently small, {\em i.e., local}, space-time regions is described by special relativity.
This principle naturally extends to the quantum case: {\em local experiments are described by quantum theory.\/}
In this quantum formulation, however, the gravitational field, the reference frames, and the space-time regions might necessitate quantum descriptions.
While different approaches target these descriptions~(see,~{\it e.g.,\/} Refs.~\cite{giacomini2019,causalSets2019,loopQuantumGravity2021,pauloMarios2021}),
another---the process-matrix framework~\cite{oreshkov2012}---abstracts away the general-relativistic freight and focuses on the idealized prescription of {\em local quantum experiments\/} in countably many regions only~(without imposing any global constraints).
Similar to the various formulations of the equivalence principles, this approach can be used to constrain competing theories of quantum gravity:
If a~candidate theory of quantum gravity exceeds the limits of the latter, then local experiments in that theory {\em must\/} disagree with quantum theory.
This, in turn, gives a prescription to experimentally falsify that candidate theory.

The process-matrix framework, {\it i.e.,}~the assumption of local quantum theory, reconciles the inherently {\em probabilistic\/} nature of quantum theory with the {\em dynamical\/} causal structures of general relativity~\cite{hardy2005}.
For one, it extends quantum indefiniteness of physical degrees of freedom like position and momentum to causal connections.
Exemplary, while the position of a mass in general relativity determines the causal order among events in its future, the {\em quantum-switch\/} process~\cite{chiribella2013} does so coherently~\cite{oreshkov2016,zych2017,araujo2015}.
For another, this framework allows for violations of causal inequalities~\cite{oreshkov2012}.
Causal inequalities, similar to Bell inequalities~\cite{bell1964}, are device-independent tests of a global causal order.
If the observed correlations violate such an inequality, then they cannot be causally explained:
Any explanation where only past data influences future observations fails.
These correlations are called {\em non-causal,} and arise in setups that resemble~\cite{baumeler2019} closed time-like curves~(CTCs)~\cite{lanczos1924,godel1949}.
As notoriously shown by G\"odel~\cite{godel1949}, CTCs appear in solutions to Einstein's equation of general relativity.

This stipulation of local quantum theory is also of interest in theoretical computer science.
A pillar of computer sciences is that machines~(programs) and data are treated on an equal footing.
This paradigm finds its climax in Church's notion of computation---the~$\lambda$ calculus~\cite{barendregt1984}---where any data {\em is\/} a~function, and therefore functions are of higher-order: functions on functions.
The process-matrix framework describes the first level of {\em higher-order quantum computation\/}~\cite{perinotti2017,bisio2019}:
Its objects---the process matrices---map quantum gates to quantum gates.
For instance, the previously mentioned quantum switch maps two quantum gates~$A,B$ to the functionality~\mbox{$(\alpha\ket 0+\beta\ket1)\otimes\ket\psi\mapsto\alpha\ket0\otimes BA\ket\psi+\beta\ket1\otimes AB\ket\psi$} where the {\em order\/} of gate application is controlled by the first qubit.
This is achieved, {\it e.g.,\/}~through programmable connections between gates~\cite{colnaghi2012}.
The quantum switch brings forth a reduction in query complexity when compared to the standard circuit model of computation~\mbox{\cite{chiribella2013,araujo2014,renner2021,renner2021a}}.

The causal relations among local quantum experiments~(gates) are conveniently expressed with {\em causal structures.}
A causal structure is a directed graph where the vertices represent laboratories, and where the edges indicate the possibility of a local laboratory to directly influence another (see Figure~\ref{fig:intro}).
The causal relations among the gates of any quantum circuit form an {\em acyclic\/} causal structure:
Naturally, a gate at depth~$d$ of the circuit has no causal influence on the input to any other gate at the same or smaller depth.
This is radically contrasted by processes:
The quantum switch, for instance, has a {\em cyclic\/} causal structure~\cite{barrett2021}.
Still, not {\em every\/} causal structure is compatible with local quantum theory:
For the output of a laboratory to influence the same laboratory's input, we require a departure from quantum theory by introducing non-linear dynamics~\cite{deutsch1991,bennett,lloyd2011}.
\begin{figure}
	\centering
	\begin{subfigure}[c]{0.15\textwidth}
		\centering
		\begin{tikzpicture}
			\def\dd{3}
			\def\ww{.25}
			\tikzstyle{vertex} = [fill,draw,circle,minimum size=\dd,inner sep=0pt]
			\draw (0,0) node[above left] {$A$} -- ++(\ww,0);
			\draw (0,-.5) node[above left] {$B$} -- ++(2*\ww+.5,0);
			\draw (\ww,-.25) rectangle node {$C$} ++(.5,.5);
			\draw (\ww+.5,0) -- ++(\ww,0);
			\draw (2*\ww+.5,-.75) rectangle node {$D$} (2*\ww+1,.25);
			\draw (2*\ww+1,0) -- ++(\ww,0) node[above right] {$E$};
			\draw (2*\ww+1,-.5) -- ++(\ww,0) node[above right] {$F$};
			\def\vdist{-2.25}
			\node[vertex,label=120:{$A$}] (A) at (0,\vdist+.5) {};
			\node[vertex,label=210:{$B$}] (B) at (0,\vdist-.5) {};
			\node[vertex,label=90:{$C$}] (C) at (\ww+.25,\vdist+.25) {};
			\node[vertex,label=270:{$D$}] (D) at (2*\ww+.75,\vdist) {};
			\node[vertex,label=60:{$E$}] (E) at (3*\ww+1,\vdist+.5) {};
			\node[vertex,label=-60:{$F$}] (F) at (3*\ww+1,\vdist-.5) {};
			\draw[-stealth] (A) -- (C);
			\draw[-stealth]	(B) -- (D);
			\draw[-stealth]	(C) -- (D);
			\draw[-stealth]	(D) -- (E);
			\draw[-stealth]	(D) -- (F);
		\end{tikzpicture}
		\caption{}
		\label{fig:introa}
	\end{subfigure}
	\begin{subfigure}[c]{0.15\textwidth}
		\centering
		\begin{tikzpicture}
			\def\dd{3}
			\tikzstyle{vertex} = [fill,draw,circle,minimum size=\dd,inner sep=0pt]
			\def\ww{.25}
			\draw (0,0) node[above left] {$P$} -- ++(\ww,0);
			\draw (\ww,0) |- ++(.25,.75) |- ++(.5,-.65) |- ++(.25,.65) |- ++(-.25,-1.5) |- ++(-.5,.65) |- ++(-.25,-.65) -- cycle;
			\node {} (\ww+.25,.1) rectangle node {$A$} ++(.5,.5);
			\node {} (\ww+.25,-.1) rectangle node {$B$}  ++(.5,-.5);
			\draw (\ww+.25,.1+.25) -- ++(.1,0);
			\draw (\ww+.25+.5,.1+.25) -- ++(-.1,0);
			\draw (\ww+.25,-.1-.25) -- ++(.1,0);
			\draw (\ww+.25+.5,-.1-.25) -- ++(-.1,0);
			\draw (\ww+1,0) -- ++(\ww,0) node[above right] {$F$};
			\def\vdist{-2}
			\node[vertex,label=120:{$P$}] (P) at (0,\vdist) {};
			\node[vertex,label=90:{$A$}] (A) at ($ (P.center) + (.75,.5) $) {};
			\node[vertex,label=-90:{$B$}] (B) at ($ (P.center) + (.75,-.5) $) {};
			\node[vertex,label=60:{$F$}] (F) at ($ (P.center) + (1.5,0) $) {};
			\draw[-stealth] (P) -- (A);
			\draw[-stealth] (P) -- (B);
			\draw[-stealth] (P) -- (F);
			\draw[-stealth] (A) -- (F);
			\draw[-stealth] (B) -- (F);
			\draw[-stealth] (A) to[out=270+30,in=90-30] (B);
			\draw[-stealth] (B) to[out=90+30,in=270-30] (A);
		\end{tikzpicture}
		\caption{}
		\label{fig:introb}
	\end{subfigure}
	\begin{subfigure}[c]{0.1\textwidth}
		\centering
		\begin{tikzpicture}
			\def\dd{3}
			\tikzstyle{vertex} = [fill,draw,circle,minimum size=\dd,inner sep=0pt]
			\draw (0,0) rectangle node {$A$} ++(.5,1);
			\def\ww{.25}
			\draw[rounded corners,->] (.5,.3) -- ++(\ww,0) |- (-\ww,-.5) |- (0,.3);
			\draw[<-] (0,.7) -- ++(-\ww,0) node[above left] {$P$};
			\draw[->] (.5,.7) -- ++(\ww,0) node[above right] {$F$};
			\def\vdist{-2.1}
			\node[vertex,label=270:{$A$}] (A) at (0,\vdist) {};
			\node[vertex,label=120:{$P$}] (P) at ($ (A.center) - (\ww+.2,0) $) {};
			\node[vertex,label=60:{$F$}] (F) at ($ (A.center) + (\ww+.2,0) $) {};
			\draw[-stealth] (A) to[out=90-30,in=90+30,looseness=40] (A);
			\draw[-stealth] (P) -- (A);
			\draw[-stealth] (A) -- (F);
		\end{tikzpicture}
		\caption{}
		\label{fig:introc}
	\end{subfigure}
	\caption{(a) A quantum circuit and its {\em acyclic\/} causal structure. (b) The quantum switch---an instance of the process-matrix framework---has a {\em cyclic\/} causal structure: Depending on the prepared state at~$P$, a quantum system is sent through the H-shaped region from~$A$ to~$B$ or from~$B$ to~$A$. (c) If~$A$ is traversed by a closed time-like curve, then~$A$'s output influences the input, and a departure from quantum theory becomes necessary.}
	\label{fig:intro}
\end{figure}

In this work, we study the causal structures that admit a quantum realization---a question raised in Ref.~\cite{baumeler2021}.
In other words: We study the possible causal relations among laboratories under the assumptions that within each laboratory no deviation from quantum theory is observable.
We find a {\em necessary graph-theoretic criterion\/} (the causal structure of every quantum process satisfies this criterion) and conjecture that the criterion is also {\em sufficient.}
The conjecture is motivated by a construction of causal models for the causal structures of interest, and is moreover numerically tested for all directed graph with up to six nodes.
In addition, we provide two graph-theoretic criteria from which, in the classical-deterministic case, only causal or also non-causal correlations arise.
Supporting the above conjecture, we show that the causal structures that satisfy the criterion for non-violation are admissible.

The presentation is structured in the following way.
First, we provide the mathematical tools necessary for the present treatment.
This is followed by our results on admissible and inadmissible causal structures. Thereafter, we relate causal structures with causal inequalities.
We conclude with a series of open questions.

\section{Preliminaries}
We briefly comment on the notation used.
If~$\mathcal H$ is a~Hilbert space, then~$\mathcal L(\mathcal H)$ is the set of linear operators on~$\mathcal H$.
We use~$\mathbb Z_n$ for the set~$\{0,1,\dots,n-1\}$.
If a~symbol is used with and without a subscript from~$\mathbb Z_n$, then the bare symbol denotes the collection under the natural composition, {\it e.g.,\/}~we will use~$x$ to denote~$(x_k)_{k\in\mathbb Z_n}$.
If the subscript is a subset of~$\mathbb Z_n$, then the composition is taken only over those elements.
Moreover, we use~$\setminus S$ as shorthand for~$\mathbb Z_n\setminus S$, and~$\setminus i$ for~$\setminus \{i\}$.
If~$\varepsilon$ is a completely positive map, then~$\rho^\varepsilon$ is its Choi operator~\cite{choi1975}.
For a directed graph~\mbox{$G=(V,E\subseteq V\times V)$},~$V$ denotes the set of nodes and~$E$ the set of directed edges.
A~{\em directed path\/}~\mbox{$\pi=(v_0,\dots,v_\ell)$} is a sequence of distinct nodes with~\mbox{$\{(v_i,v_{i+1})\mid 0\leq i < \ell\}\subseteq E$}.
A~{\em directed cycle\/}~\mbox{$C=(v_0,\dots,v_\ell)$} is a directed path with~$(v_\ell,v_0)\in E$.
The induced graph $G[V']$ is the graph $G'=(V',E')$ where~$V'\subseteq V$ and all edges~$E'\subseteq E$ have endpoints in~$V'$, {\em i.e.,}~$(i,j)\in E'$ if and only if~\mbox{$i,j\in V'$} and~$(i,j)\in E$.
A~directed cycle~$C$ is called {\em induced directed cycle\/} or {\em chordless cycle\/} if the induced graph~$G[C]$ is a directed cycle graph.
For a~node~$k\in V$, we use~$\Pa(k)$,~$\Ch(k)$, and~$\Anc(k)$ to express the set of parents, children, and ancestors of~$k$, respectively, and similarly~$\Pa(S)$,~$\Ch(S)$ and $\Anc(S)$ by taking the union over a set~$S$.
The cardinality of the set $\Pa(k)$ is the {\em in-degree\/} $\deg_\text{in}(k)$.
A~node with zero in-degree is called {\em source.}
We use $\CoPa(S)$ for the union of the common parents of all elements $i\neq j\in S$, {\em i.e.,} $\CoPa(S):=\bigcup_{i\neq j\in S}\Pa(i)\cap \Pa(j)$.
Two nodes~$i,j\in V$ are called {\em siblings\/} if and only if they have common parents, {\it i.e.,\/}~$\CoPa(\{i,j\})\neq\emptyset$.
A directed path~$\pi$ is said to {\em contain siblings\/} if and only if~$\CoPa(\pi)\neq\emptyset$.

\subsection{Correlations}
Correlations observed among~$n$ parties (regions)~$\mathbb Z_n$ are expressed with the conditional probability distribution~$p(a\mid x)$
where for each party~$k$ we have~$a_k\in \mathcal A_k$, and~$x_k\in\mathcal X_k$.
The set~$\mathcal X_k$ is the set of experimental {\em settings\,} and~$\mathcal A_k$ the set of experimental {\em observations.}
\begin{definition}[Causal correlations~\cite{oreshkov2012, oreshkov2016, alastaircorr2016}]
	\label{def:causal}
	The~$n$-party correlations~$p(a\mid x$) are {\em causal\/} if and only if they can be decomposed as
	\begin{align}
		p(a\mid x) = \sum_{k\in\mathbb Z_n} q_k p(a_k\mid x_k) p_{a_k}^{x_k}(a_{\setminus k}\mid x_{\setminus k})
		\,,
	\end{align}
	where~$\forall k\in\mathbb Z_n: q_k \geq 0$,~$\sum_{k\in\mathbb Z_n} q_k=1$, and~\mbox{$p_{a_k}^{x_k}(a_{\setminus k}\mid x_{\setminus k})$} are~$(n-1)$-party causal correlations.
	If this decomposition is infeasible, the correlations are called {\em non-causal.}
\end{definition}
The motivation behind this definition is that each party can influence its future only, including the causal order of the parties in its future. 
From this follows that there exists at least one party~$k$ whose observation does not depend on the data of any other party (the value of~$a_k$ depends solely on~$x_k$).
Moreover, the causal order among the parties might be subject to randomness, {\it e.g.,\/}~a coin flip.

\subsection{Processes}
In the process-matrix framework~\cite{oreshkov2012}, each party (region) is defined through a past and future space-like boundary~(see Figure~\ref{fig:process}).
\begin{figure}
	\centering
	\begin{tikzpicture}
		\def\dv{0.75cm}
		\coordinate (A) at (0,0);
		\coordinate (AL) at ($ (A) - (\dv,0) $);
		\coordinate (AR) at ($ (A) + (\dv,0) $);
		\coordinate (B) at (2,0);
		\coordinate (BL) at ($ (B) - (\dv,0) $);
		\coordinate (BR) at ($ (B) + (\dv,0) $);
		\coordinate (C) at (4,0);
		\coordinate (CL) at ($ (C) - (\dv,0) $);
		\coordinate (CR) at ($ (C) + (\dv,0) $);
		\def\alpha{60}
		\path[-,thick] (AL) to[out=\alpha,in=180-\alpha] node[midway] (AT) {} (AR) to[out=\alpha-180,in=-\alpha] node[midway] (AB) {} (AL);
		\path[-,thick] (BL) to[out=\alpha,in=180-\alpha] node[midway] (BT) {} (BR) to[out=\alpha-180,in=-\alpha] node[midway] (BB) {} (BL);
		\path[-,thick] (CL) to[out=\alpha,in=180-\alpha] node[midway] (CT) {} (CR) to[out=\alpha-180,in=-\alpha] node[midway] (CB) {} (CL);
		\def\d{.6}
		\def\vv{1.25}
		\def\hh{.3}
		\draw[fill=lightgray] ($ (AT) + (-\vv-\d,\hh+\d) $) -- ($ (CT) + (\vv+\d,\hh+\d) $)
		-- ($ (CB) + (\vv+\d,-\hh-\d) $) -- ($ (AB) + (-\vv-\d,-\hh-\d) $)
		-- cycle;
		\draw[fill=white] ($ (AT) + (-\vv,\hh) $) -- ($ (CT) + (\vv,\hh) $)
		-- ($ (CB) + (\vv,-\hh) $) -- ($ (AB) + (-\vv,-\hh) $)
		-- cycle;
		\draw[-,thick] (AL) to[out=\alpha,in=180-\alpha] node[pos=.2,above] {$\mathcal O_0$} node[midway] (AT) {} (AR) to[out=\alpha-180,in=-\alpha] node[pos=.8,below] {$\mathcal I_0$} node[midway] (AB) {} (AL);
		\draw[-,thick] (BL) to[out=\alpha,in=180-\alpha] node[pos=.2,above] {$\mathcal O_1$} node[midway] (BT) {} (BR) to[out=\alpha-180,in=-\alpha] node[pos=.8,below] {$\mathcal I_1$} node[midway] (BB) {} (BL);
		\draw[-,thick] (CL) to[out=\alpha,in=180-\alpha] node[pos=.2,above] {$\mathcal O_2$} node[midway] (CT) {} (CR) to[out=\alpha-180,in=-\alpha] node[pos=.8,below] {$\mathcal I_2$} node[midway] (CB) {} (CL);
		\node (Acc) at (A) {$\mu_0^{a_0|x_0}$};
		\node (Bcc) at (B) {$\mu_1^{a_1|x_1}$};
		\node (Ccc) at (C) {$\mu_2^{a_2|x_2}$};
		\draw[>=stealth,double,->] (AT.center) -- ++(0,\hh);
		\draw[>=stealth,double,->] (BT.center) -- ++(0,\hh);
		\draw[>=stealth,double,->] (CT.center) -- ++(0,\hh);
		\draw[>=stealth,double,<-] (AB.center) -- ++(0,-\hh);
		\draw[>=stealth,double,<-] (BB.center) -- ++(0,-\hh);
		\draw[>=stealth,double,<-] (CB.center) -- ++(0,-\hh);
		\draw[rounded corners,thick,red,->] ($ (AT) + (0,\hh) $)
		-- ($ (AT) + (0,\hh+\d/2) $)
		-- ($ (AT) + (-\vv-\d/2,\hh+\d/2) $)
		-- ($ (AB) + (-\vv-\d/2,-\hh-\d/2) $)
		-- ($ (BB) + (0,-\hh-\d/2) $)
		-- ($ (BB) + (0,-\hh) $);
		\draw[rounded corners,thick,red,->] ($ (BT) + (0,\hh) $)
		-- ($ (BT) + (0,\hh+\d/2) $)
		-- ($ (CT) + (\vv+\d/2,\hh+\d/2) $)
		-- ($ (CB) + (\vv+\d/2,-\hh-\d/2) $)
		-- ($ (CB) + (0,-\hh-\d/2) $)
		-- ($ (CB) + (0,-\hh) $);
	\end{tikzpicture}
	\caption{Schematic of three parties and a process. The gray area represents the process:
	It takes the systems on the future boundaries of the parties and maps it to the past boundaries of the parties.
	The red connections indicate an example where the parties are causally ordered increasingly.
	{\it A priori,} however, the process is not assumed to respect any ordering of the parties.
}
	\label{fig:process}
\end{figure}
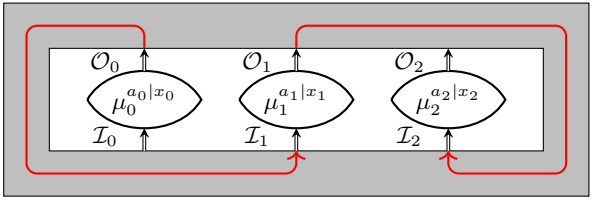
For party~$k$, we denote by~$\mathcal I_k$ the Hilbert space on the past, by~$\mathcal O_k$ the Hilbert space on the future boundary, and by~$\cpm_k$ ($\cptp_k$) the set of all completely positive~(trace-preserving) maps from~$\mathcal L(\mathcal I_k)$ to~$\mathcal L(\mathcal O_k)$.
A~quantum experiment for party~$k$ is a quantum channel from~$\mathcal I_k$ to~$\mathcal O_k$, equipped with a classical input (the setting) and a~classical output (the observation).
Hence, it is a~family~\mbox{$\{\mu^{a_k\mid x_k}_k\in\cpm_k\}_{(a_k,x_k)\in\mathcal A_k\times\mathcal X_k}$} of maps, such that for all~$x_k\in\mathcal X_k$, we have~\mbox{$\sum_{a_k\in\mathcal A_k}\mu_k^{a_k\mid x_k}\in\cptp_k$}.
A process interlinks all parties without any assumption on their causal relations, but with the sole assumption that no deviation from quantum theory is locally observable, {\it i.e.,\/}~the probability distribution over the observations is a multi-linear function of the quantum experiments and well-defined (even if the parties share an arbitrary quantum system).
\begin{definition}[Process~\cite{oreshkov2012}]
	An~$n$-party {\em quantum process\/} is a positive semi-definite operator~\mbox{$W\in\mathcal L(\mathcal I\otimes\mathcal O)$}
	with
	\begin{align}
		\forall \{\mu_k\in\cptp_k\}_{k\in\mathbb Z_n}:
		\Tr\left[W\bigotimes_{k\in\mathbb Z_n}\rho^{\mu_k}\right] = 1
		\,.
		\label{eq:qprocess}
	\end{align}
\end{definition}
Note that in this definition, the experimental settings and observations are absent, or equivalently, the sets~$\mathcal A_k, \mathcal X_k$ are singletons.
Eq.~\eqref{eq:qprocess} states that the total probability of observing this single outcome under this single setting is one;
this holds for any setting and is independent of any resolution of the completely positive trace-preserving map into maps with classical outputs.
Oreshkov, Costa, and Brukner~\cite{oreshkov2012} already observed that the quantum process~$W$ is the Choi operator of a~completely positive trace-preserving map from all future boundaries to all past boundaries of the parties ({\it cf.}~Figure~\ref{fig:process}).
For a~given choice of~$n$-party quantum process and experiments, the correlations are computed with the generalized Born rule
\begin{align}
	p(a\mid x)
	:=
	\Tr\left[
		W
		\bigotimes_{k\in\mathbb Z_n}
		\rho^{\mu_k^{a_k\mid x_k}}
	\right]
	\,.
\end{align}

If one assumes that the parties perform classical-deterministic experiments, as opposed to quantum experiments, we arrive at the following special case.
The spaces on the past and future boundaries~$\mathcal I_k,\mathcal O_k$ are sets (as opposed to Hilbert spaces),
and an experiment for party~$k$ is a
function~\mbox{$\mu_k:\mathcal I_k\times\mathcal X_k\rightarrow\mathcal O_k\times\mathcal A_k$} (as opposed to a family of maps),
where we use~\mbox{$\mu_k^0:\mathcal I_k\times\mathcal X_k\rightarrow\mathcal O_k$} for the first, and~\mbox{$\mu_k^1:\mathcal I_k\times\mathcal X_k\rightarrow\mathcal A_k$} for the second component.
Just as in the quantum case, a classical-deterministic process~$\omega$ turns out to be a function from the future to the past boundaries, and Eq.~\eqref{eq:qprocess} translates into an intuitive condition:
For any choice of experiment, there exists a~{\em unique\/} consistent assignment of values to the input spaces; the map~$\omega\circ\mu$ has a~unique fixed point.
\begin{theorem}[Classical-determintic process~\cite{baumeler2016}]
	An~$n$-party {\em classical-deterministic process\/} is a function~\mbox{$\omega:\mathcal O\rightarrow\mathcal I$}
	with
	\begin{align}
		\forall \{\mu_k:\mathcal I_k\rightarrow \mathcal O_k\}_{k\in\mathbb Z_n},\,\exists ! (r_k)_{k\in\mathbb Z_n}:
		r = \omega(\mu(r))
		\,,
		\label{eq:cprocess}
	\end{align}
	where~$\exists!$ is the uniqueness quantifier.
	\label{thm:cprocess}
\end{theorem}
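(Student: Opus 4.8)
The plan is to obtain Eq.~\eqref{eq:cprocess} as the classical-deterministic image of the defining property~\eqref{eq:qprocess} of a quantum process, through the Choi dictionary recalled above. First I would translate the ingredients: each Hilbert space $\mathcal I_k,\mathcal O_k$ becomes a finite set, a normalized state on $\mathcal I_k$ becomes a single point $r_k\in\mathcal I_k$, a channel $\mu_k$ becomes a function $\mathcal I_k\to\mathcal O_k$ whose Choi operator is the $0/1$ matrix $N_k$ with $(N_k)_{o_k,r_k}=1$ exactly when $\mu_k(r_k)=o_k$, and---since Oreshkov--Costa--Brukner showed the process is the Choi operator of a map from all future boundaries to all past boundaries---the process itself becomes a function $\omega:\mathcal O\to\mathcal I$, encoded by the $0/1$ matrix $M$ with $M_{r,o}=1$ exactly when $\omega(o)=r$. (That $\omega$ must be total and single-valued, rather than a relation, is itself forced by demanding the correct normalization for all experiments, and I would flag this at the outset.) Under this dictionary $\bigotimes_{k}\rho^{\mu_k}$ becomes $\bigotimes_k N_k$, whose entry at $(o,r)$ is $\prod_k[\mu_k(r_k)=o_k]=[\mu(r)=o]$, so tracking the index structure gives
\begin{equation}
\Tr\big[M\textstyle\bigotimes_k N_k\big]=\big|\{\,r\in\mathcal I\mid \omega(\mu(r))=r\,\}\big|\,,
\end{equation}
the number of fixed points of $\omega\circ\mu$, and the left-hand side is the classical-deterministic value of $\Tr[W\bigotimes_k\rho^{\mu_k}]$. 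Hence~\eqref{eq:qprocess}---that this trace equals $1$ for the single trivial outcome and for \emph{every} choice of deterministic experiments---holds if and only if $\omega\circ\mu$ has exactly one fixed point for every $\mu$, which is precisely Eq.~\eqref{eq:cprocess}.

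Next I would verify the converse at the level of correlations with settings and observations reinstated, confirming that any $\omega$ satisfying~\eqref{eq:cprocess} is a legitimate classical-deterministic process. Given experiments $\mu_k:\mathcal I_k\times\mathcal X_k\to\mathcal O_k\times\mathcal A_k$ and a fixed setting tuple $x$, forgetting the observations leaves the deterministic map $\nu^{x}_k:=\mu^0_k(\cdot,x_k):\mathcal I_k\to\mathcal O_k$; by~\eqref{eq:cprocess} the composition $\omega\circ\nu^{x}$ has a unique fixed point $r(x)$, and reading off $a_k:=\mu^1_k(r_k(x),x_k)$ yields a single outcome tuple $a(x)$. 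Setting $p(a\mid x)$ equal to $1$ if $a=a(x)$ and $0$ otherwise then gives a well-defined, normalized conditional distribution, and this is the correct assignment because the classical Born rule is linear in each local experiment while every classical instrument is a convex mixture of deterministic functions, so consistency on deterministic experiments already pins down the general behavior.

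The step I expect to demand the most care is explaining \emph{why uniqueness}, and not mere existence, of the fixed point is the right requirement---and, dually, why imposing the loop condition only for deterministic $\mu$ suffices. Existence alone would only force $\Tr[M\bigotimes_k N_k]\geq 1$; a second fixed point makes the ``total outcome probability'' strictly exceed $1$, i.e.\ the correlations fail to be normalizable and the closed circuit admits two mutually inconsistent global assignments, so it is genuinely ill-defined rather than merely non-deterministic. Thus the substance of the theorem is the equivalence ``$\omega$ yields well-defined, normalized classical correlations for all deterministic experiments'' $\Leftrightarrow$ ``$\omega\circ\mu$ has a unique fixed point for every $\mu$'', and the remaining work is the bookkeeping that promotes the displayed operator identity to this statement, together with the convex-decomposition argument that reduces arbitrary classical instruments to deterministic functions. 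Positivity of $W$ requires no separate treatment, since $0/1$ matrices embed into positive operators automatically.
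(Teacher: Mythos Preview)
The paper does not supply its own proof of this theorem: it is quoted from Ref.~\cite{baumeler2016} as a known characterization and used as a black box in the subsequent arguments. There is therefore no in-paper proof to compare your attempt against.

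That said, your approach is sound and is in the spirit of how the result is derived in the cited literature. The key identity
\[
\Tr\Big[W_\omega\bigotimes_{k}\rho^{\mu_k}\Big]=\big|\{\,r\in\mathcal I:\omega(\mu(r))=r\,\}\big|
\]
is exactly the computation one finds there, and your reduction of arbitrary classical instruments to deterministic ones by convex decomposition is the standard way to justify that testing on deterministic $\mu$ suffices. One point you should sharpen: you assert that $\omega$ being a total single-valued function (rather than a relation or partial function) is ``forced by demanding the correct normalization for all experiments,'' but you do not actually carry this out. Concretely, one checks that if $\omega$ were a relation, the trace above could exceed~$1$ even for constant~$\mu$, and if it were partial it could vanish; spelling this out (or, equivalently, invoking that a diagonal $W$ is the Choi operator of a CPTP map only when each column sums to~$1$) closes the gap. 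With that addition your argument is complete.
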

Here, the correlations among the~$n$-parties are computed via
\begin{align}
	p(a|x): = \omega \star \mu^{a\mid x}
	\!=\sum_{i,o}
	\left[\omega (o)=i\right]
	\left[(o,a)=\mu(x,i)\right]
	\,,
	\label{eq:correlations}
\end{align}
where we use~$[n=m]$ for the Kronecker delta~$\delta_{n,m}$, and~$\star$ for the link product~\cite{chiribella2008}.
Note that every~$n$-party classical-deterministic process~$\omega$ also corresponds to an~$n$-party quantum process~\cite{baumeler2021}
\begin{align}
	W_\omega := \sum_{o\in\mathcal O} \ket{o}\bra{o}_{\mathcal O} \otimes \ket{\omega(o)}\bra{\omega(o)}_{\mathcal I}
	\,,
	\label{eq:classical2quantum}
\end{align}
where~$\ket o=\bigotimes_{k\in\mathbb Z_n}\ket {o_k}$, with~$\{\ket {o'_k}\}_{o'_k\in\mathcal O_k}$ being a basis for all~$k\in\mathbb Z_n$, and similarly for~$\ket{\omega(o)}$.

We briefly illustrate the above theorem in the single-party case~$(n=1)$.
Assume the function~$\omega$ is the identity function from~$\mathcal O=\{0,1\}$ to~$\mathcal I=\{0,1\}$.
Here,~$\omega$ describes a {\em closed time-like curve:} The output of that single party is identically mapped to the same party's input.
Suppose now that this single party implements the negation~\mbox{$\mu(b)=1-b$}.
What is the state of the system on the party's past boundary?
In this instance of time-travel antinomy---the grandparent antinomy---no consistent specification is possible:
If it is~$b$, then the local experiment specifies the state on the future boundary to be~$1-b$, and in turn, the function~$\omega$ specifies the state on the past boundary to be~$1-b$; but it is~$b$.
In other terms, the function~$\omega\circ\mu$ has {\em no\/} fixed point; this function~$\omega$ is {\em not\/} a classical-deterministic process.
Another instance of a~time-travel antinomy---the information antinomy---arises if the party implements the {\em identity\/} experiment~$\mu(b)=b$.
Then, the state on the party's past boundary is not determined: both values,~$0$ and~$1$, are equally justifiable; the function~$\omega\circ\mu$ has {\em two\/} fixed points.
As it turns out, these two antinomies are equivalent:
\begin{theorem}[Equivalence of antinomies~\cite{baumeler2020equivalence}]
	The~$n$-party function~$\omega:\mathcal O\rightarrow\mathcal I$ suffers from the grandparent antinomy, {\it i.e.,}~there exists a choice of experiments~\mbox{$\{\mu_k:\mathcal I_k\rightarrow\mathcal O_k\}_{k\in\mathbb Z_n}$} such that~$\omega\circ\mu$ has {\em no\/} fixed points,
	if and only if~$\omega$ suffers from the information antinomy, {\it i.e.,}~there exists a~$\mu$ such that~$\omega\circ\mu$ has {\em two or more\/} fixed points.
	\label{thm:equivalence}
\end{theorem}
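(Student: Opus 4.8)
The plan is to prove both directions at once, not by turning a witness for one antinomy into a witness for the other, but from a single counting identity: \emph{the average number of fixed points of $\omega\circ\mu$, taken over all experiments $\mu=(\mu_k)_{k\in\mathbb Z_n}$ with $\mu_k:\mathcal I_k\to\mathcal O_k$, equals exactly one}. Since every $\mathcal I_k,\mathcal O_k$ is a finite nonempty set, the collection $\mathcal M$ of experiments is finite with $|\mathcal M|=\prod_{k\in\mathbb Z_n}|\mathcal O_k|^{|\mathcal I_k|}$. Writing $N(\mu):=\bigl|\{r\in\mathcal I:r=\omega(\mu(r))\}\bigr|$ for the number of fixed points (so the grandparent antinomy is ``$\exists\mu\in\mathcal M:N(\mu)=0$'' and the information antinomy is ``$\exists\mu\in\mathcal M:N(\mu)\geq2$''), the identity to establish is
\begin{align}
\sum_{\mu\in\mathcal M}N(\mu)=|\mathcal M|\,.
\end{align}
Granting this, the theorem is immediate. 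If the information antinomy holds, some $\mu^\star$ has $N(\mu^\star)\geq2$; were all other $N(\mu)\geq1$, the sum would be at least $|\mathcal M|+1$, contradicting the identity, so some $\mu$ has $N(\mu)=0$, the grandparent antinomy. Symmetrically, if some $\mu^\star$ has $N(\mu^\star)=0$, not all other $N(\mu)$ can be $\leq1$ (else the sum is at most $|\mathcal M|-1$), so some $\mu$ has $N(\mu)\geq2$, the information antinomy.

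To prove the identity I would exchange the two finite sums and count experiments pointwise,
\begin{align}
\sum_{\mu\in\mathcal M}N(\mu)=\sum_{r\in\mathcal I}\bigl|\{\mu\in\mathcal M:\omega(\mu(r))=r\}\bigr|\,.
\end{align}
For a fixed $r$, whether $\omega(\mu(r))=r$ depends on $\mu$ only through $\mu(r)=(\mu_k(r_k))_{k\in\mathbb Z_n}\in\mathcal O$, that is, through exactly one table entry $\mu_k(r_k)$ of each local function $\mu_k$, while the remaining $|\mathcal I_k|-1$ entries of each $\mu_k$ are unconstrained. Hence, for every $o\in\mathcal O$, the number of $\mu\in\mathcal M$ with $\mu(r)=o$ equals $\prod_{k}|\mathcal O_k|^{|\mathcal I_k|-1}$, independent of $o$, so $\bigl|\{\mu\in\mathcal M:\omega(\mu(r))=r\}\bigr|=|\omega^{-1}(r)|\cdot\prod_{k}|\mathcal O_k|^{|\mathcal I_k|-1}$. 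Summing over $r\in\mathcal I$ and using $\sum_{r\in\mathcal I}|\omega^{-1}(r)|=|\mathcal O|=\prod_k|\mathcal O_k|$ (valid because $\omega$ is a total function) gives $\bigl(\prod_k|\mathcal O_k|\bigr)\bigl(\prod_k|\mathcal O_k|^{|\mathcal I_k|-1}\bigr)=\prod_k|\mathcal O_k|^{|\mathcal I_k|}=|\mathcal M|$, as claimed.

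Everything after spotting the averaging identity is routine bookkeeping; the two points that need care are that $\mathcal M$ is the set of \emph{product} experiments (so that fixing $\mu(r)$ leaves precisely the off-$r$ entries of each $\mu_k$ free) and that finiteness and nonemptiness of the $\mathcal I_k,\mathcal O_k$ is what makes the mean-value argument legitimate. I expect the real obstacle to be conceptual rather than computational: the statement invites one to look for an explicit, uniform transformation between an experiment witnessing one antinomy and one witnessing the other, and it is not obvious that such a transformation exists; the clean route is the non-constructive counting above, which bypasses the question altogether and, as a byproduct, re-derives Theorem~\ref{thm:cprocess}'s dichotomy by showing that $\omega$ is a classical-deterministic process precisely when $N(\mu)=1$ for every $\mu\in\mathcal M$.
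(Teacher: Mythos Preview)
The paper does not actually prove this theorem; it is quoted as a result from Ref.~\cite{baumeler2020equivalence} and used as a tool later (in the proof of Theorem~\ref{thm:admissiblerestricted}). So there is no in-paper argument to compare against.

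Your proof is correct. The averaging identity $\sum_{\mu\in\mathcal M}N(\mu)=|\mathcal M|$ is exactly the mechanism behind the cited result: once the mean number of fixed points is~$1$, the two antinomies are forced to coexist by pigeonhole. Your derivation of the identity is clean---the key step, that for fixed $r\in\mathcal I$ the map $\mu\mapsto\mu(r)$ is uniformly $\prod_k|\mathcal O_k|^{|\mathcal I_k|-1}$-to-one onto $\mathcal O$, uses precisely the product structure of $\mathcal M$ you flagged, and then $\sum_r|\omega^{-1}(r)|=|\mathcal O|$ closes the count. The only tacit hypothesis you rely on is that all $\mathcal I_k,\mathcal O_k$ are finite and nonempty; this is the setting of the paper, but it is worth stating explicitly, since for infinite alphabets the averaging argument (and indeed the equivalence itself) can fail.
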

In the single-party case, the only classical-deterministic processes are the constant functions~$\omega(b)=c$, the {\em unique\/} fixed point being~$c$.
For three or more parties, classical-deterministic processes exist that allow for non-causal correlations:
Any simulation of these correlations requires the abandonment of a causal order, resulting in a form of closed time-like curves~\cite{baumeler2019}, or, as has been recently suggested, time-delocalized systems~\cite{wechs2022}.

In the present treatment, we will also make use of {\em reduced processes:}
If we invoke an experiment~$\mu_k$ of a~single party~$k$, but leave all other experiments unspecified, then the reduced function is a process again.
To do so, we specify the state on the future boundary of party~$k$, which is~$\mu_k(i_k)$, where~$i_k$ is the state on the past boundary of party~$k$.
The state~$i_k$ is well-defined because the process is component-wise non-signaling:~$i_k:=\omega_k(o)$ is independent of~$o_k$.
\begin{lemma}[Component-wise non-signaling and reduced process~\cite{baumeler2019,baumeler2020equivalence}]
	If~$\omega:\mathcal O\rightarrow\mathcal I$ is an~$n$-party classical-deterministic process, then it is component-wise non-signaling, {\it i.e.,}
	the component~$\omega_k$ does not depend on the~$k$-th input:
	\begin{align}
		\forall k\in\mathbb Z_n,\forall o\in\mathcal O,o'_k\in\mathcal O_k:
		\omega_k\left(o\right)
		=
		\omega_k\left(o'_k,o_{\setminus k}\right)
		\,.
	\end{align}
	If~$\omega:\mathcal O\rightarrow\mathcal I$ is an~$(n\geq2)$-party classical-deterministic process, then for all~$k\in\mathbb Z_n$ and~\mbox{$\mu_k:\mathcal I_k\rightarrow\mathcal O_k$}, the {\em reduced function\/}~\mbox{$\omega^{\mu_k}_{\setminus k}:\mathcal O_{\setminus k} \rightarrow \mathcal I_{\setminus k}$}
	defined for all~\mbox{$\ell\in\mathbb Z_n\setminus \{k\}$} via
	\begin{align}
		\omega^{\mu_k}_{\ell}: \mathcal O_{\setminus k} &\rightarrow\mathcal I_{\ell}\\
		o_{\setminus k} &\mapsto \omega_\ell\left( o_{\setminus k}, \mu_k\left( \omega_k\left( o_{\setminus k} \right) \right) \right)
	\end{align}
	is an~$(n-1)$-party classical-deterministic process.
	\label{lemma:reduced}
\end{lemma}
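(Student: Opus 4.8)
The plan is to prove the two assertions in order, deriving the second from the first. Throughout I would use only Eq.~\eqref{eq:cprocess} and the fact, recalled just before the lemma, that the only single-party classical-deterministic processes are the constant functions.

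\emph{Component-wise non-signaling.} Fix a party $k$ and an arbitrary tuple $o_{\setminus k}\in\mathcal O_{\setminus k}$, and define the single-argument function $g:\mathcal O_k\to\mathcal I_k$ by $g(o_k):=\omega_k(o_k,o_{\setminus k})$. The first step is to show that $g$ is itself a single-party classical-deterministic process, i.e.\ that for every $\mu_k:\mathcal I_k\to\mathcal O_k$ the self-map $g\circ\mu_k$ has a unique fixed point. To see this, I would feed into Eq.~\eqref{eq:cprocess} the full family of experiments in which $\mu_k$ is the given map and, for each $\ell\neq k$, $\mu_\ell$ is the constant map onto $o_\ell$. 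Then $\mu(r)=(\mu_k(r_k),o_{\setminus k})$, so the fixed-point equation $r=\omega(\mu(r))$ splits into $r_k=g(\mu_k(r_k))$ together with $r_\ell=\omega_\ell(\mu_k(r_k),o_{\setminus k})$ for $\ell\neq k$; consequently the fixed points of $\omega\circ\mu$ are in bijection with those of $g\circ\mu_k$, and the uniqueness guaranteed by Eq.~\eqref{eq:cprocess} transfers to $g\circ\mu_k$. Since a single-party classical-deterministic process must be constant, $g$ is constant, i.e.\ $\omega_k(o_k,o_{\setminus k})$ is independent of $o_k$; as $o_{\setminus k}$ was arbitrary, this is exactly the claimed non-signaling property. (If a self-contained justification of the constancy of $g$ is preferred: were $g(a)\neq g(b)$ for some $a,b\in\mathcal O_k$, the experiment sending every $i\neq g(a)$ to $a$ and $g(a)$ to $b$ would make $g\circ\mu_k$ fixed-point-free, contradicting Eq.~\eqref{eq:cprocess}.)

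\emph{Reduced process.} Non-signaling lets us treat $\omega_k$ as a function of $o_{\setminus k}$ alone, so the reduced function $\omega^{\mu_k}_{\setminus k}:\mathcal O_{\setminus k}\to\mathcal I_{\setminus k}$ is well defined as in the statement. To verify Eq.~\eqref{eq:cprocess} for it, I would fix experiments $\{\mu_\ell:\mathcal I_\ell\to\mathcal O_\ell\}_{\ell\neq k}$, adjoin the given $\mu_k$ to obtain a full family $\{\mu_j\}_{j\in\mathbb Z_n}$, and let $r$ be the unique tuple with $r=\omega(\mu(r))$. For existence of a reduced fixed point, non-signaling gives $r_k=\omega_k(\mu_{\setminus k}(r_{\setminus k}))$, and substituting this back into $r_\ell=\omega_\ell(\mu_k(r_k),\mu_{\setminus k}(r_{\setminus k}))$ for $\ell\neq k$ yields precisely $r_\ell=\omega^{\mu_k}_\ell(\mu_{\setminus k}(r_{\setminus k}))$, so $r_{\setminus k}$ is a fixed point of the reduced map. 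For uniqueness, given any solution $s_{\setminus k}$ of the reduced equation I would set $s_k:=\omega_k(\mu_{\setminus k}(s_{\setminus k}))$ and check, again using non-signaling, that $(s_k,s_{\setminus k})$ is a fixed point of $\omega\circ\mu$; the uniqueness clause of Eq.~\eqref{eq:cprocess} then forces $(s_k,s_{\setminus k})=r$, hence $s_{\setminus k}=r_{\setminus k}$. Thus $\omega^{\mu_k}_{\setminus k}$ satisfies Eq.~\eqref{eq:cprocess} with $n-1$ parties.

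\emph{Main obstacle.} The calculations are routine once the auxiliary experiments are chosen correctly; the only genuinely delicate point is the first part—verifying that restricting all parties but $k$ to constant experiments produces an \emph{honest} single-party process whose unique-fixed-point property is \emph{inherited} from (not merely necessary for) the $n$-party condition. Everything downstream, including the very well-definedness of $\omega^{\mu_k}_{\setminus k}$ and all the substitutions above, relies on that non-signaling conclusion, so the bijection-of-fixed-points bookkeeping is where I would be most careful.
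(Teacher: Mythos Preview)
The paper does not prove this lemma; it is stated with citations to Refs.~\cite{baumeler2019,baumeler2020equivalence} and then used as a black box. Your argument is correct and is essentially the standard one from those references: for non-signaling, freeze all parties except~$k$ to constants and observe that the resulting single-argument map inherits the unique-fixed-point property, hence must be constant; for the reduced process, lift any reduced fixed point to a full one (and project the unique full fixed point down) via the non-signaling identity~$r_k=\omega_k(\mu_{\setminus k}(r_{\setminus k}))$. There is nothing in the paper to compare against, and your write-up would serve as a self-contained replacement for the bare citation.
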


The process establishes the causal connections among the parties.
Such connections---the causal structure---are conveniently expressed with a directed graph where the nodes~$\mathbb Z_n$ represent the parties, and where the {\em absence\/} of an edge from~$i$ to~$j$ indicates that the process is not signaling from the future boundary of party~$i$ to the past boundary of party~$j$.
The causal structure of the process schematically represented with the red arrows in Figure~\ref{fig:process} has no edge from~$1$ to~$0$, from~$2$ to~$0$ and from~$2$ to~$1$.

\subsection{Causal models}
A causal model is a causal structure (a directed graph) equipped with model parameters (channels along the edges).
Traditionally, the nodes of a causal model are random variables~\cite{pearl2009}.
Here, in contrast, we adopt the split-node model~\cite{richardson2013}, where each node is split into an incoming and an outgoing part, the past and future boundary ({\it cf.}~Figure~\ref{fig:QCMexample}).
We follow the recent work by Barrett, Lorenz, and Oreshkov~\cite{barrett2021} which unifies processes and causal models.
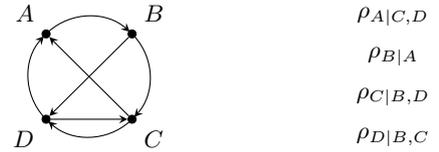
\begin{figure}
	\centering
	\begin{tikzpicture}
		\def\radius{.8}
		\def\rotate{45}
		\def\anga{90+\rotate}
		\def\angb{0+\rotate}
		\def\angc{-90+\rotate}
		\def\angd{180+\rotate}
		\def\dd{3}
		\tikzstyle{vertex} = [fill,draw,circle,minimum size=\dd,inner sep=0pt]
		\node[vertex,label=\anga:{$A$}] (A) at (\anga:\radius) {};
		\node[vertex,label=\angb:{$B$}] (B) at (\angb:\radius) {};
		\node[vertex,label=\angc:{$C$}] (C) at (\angc:\radius) {};
		\node[vertex,label=\angd:{$D$}] (D) at (\angd:\radius) {};
		\draw[-stealth] (\anga-\dd:\radius) arc(\anga-\dd:\angb+\dd:\radius);
		\draw[-stealth] (\angb-\dd:\radius) arc(\angb-\dd:\angc+\dd:\radius);
		\draw[-stealth] (\angc-\dd:\radius) arc(360+\angc-\dd:\angd+\dd:\radius);
		\draw[-stealth] (\angd-\dd:\radius) arc(\angd-\dd:\anga+\dd:\radius);
		\draw[-stealth] (D) -- (C);
		\draw[-stealth] (C) -- (A);
		\draw[-stealth] (B) -- (D);
		\node (MA) at (4,\radius) {$\rho_{A\mid C,D}$};
		\node (MB) at (4,\radius/3) {$\rho_{B\mid A}$};
		\node (MC) at (4,-\radius/3) {$\rho_{C\mid B,D}$};
		\node (MD) at (4,-\radius) {$\rho_{D\mid B,C}$};
	\end{tikzpicture}
	\caption{Example of a four-node causal model. The state on the input space of node~$A$ is obtained by evolving the state on the output space of nodes~$C,D$ through the channel~$\rho_{A\mid C,D}$, and similarly for the other nodes.}
	\label{fig:QCMexample}
\end{figure}
\begin{definition}[Causal model, consistency, and faithfulness~\cite{barrett2021}]
	An~$n$-party {\em causal model\/} is a directed graph~$G=(\mathbb Z_n,E)$ (causal structure) equipped with~$\{\rho_{k\mid\Pa(k)}\}_{k\in\mathbb Z_n}$ (model parameters).
	In the {\em classical-deterministic\/} case, the model parameters are functions~$\mathcal O_{\Pa(k)}\rightarrow\mathcal I_k$, and define a {\em classical map\/}~\mbox{$\omega:=(\rho_{k\mid\Pa(k)})_{k\in\mathbb Z_n}$}.
	In the {\em quantum\/} case, the model parameters are the Choi operators of completely positive trace-preserving maps~$\mathcal L(\mathcal O_{\Pa(k)})\rightarrow\mathcal L(\mathcal I_k)$, such that~$\forall i,j\in\mathbb Z_n:[\rho_{i\mid\Pa(i)},\rho_{j\mid\Pa(j)}]=0$, and define a {\em quantum map\/}~$W:=\prod_{k\in\mathbb Z_n}\rho_{k\mid\Pa(k)}$.
	The causal model is {\em consistent\/} if and only if~$\omega,W$, is an~$n$-party classical-deterministic or quantum process, respectively.
	The causal model is {\em faithful\/} if and only if for all~$k$, the model parameter~$\rho_{k\mid\Pa(k)}$ is signaling from all~$\ell\in\Pa(k)$ to~$k$.
\end{definition}
Allow us to comment on some aspects of the above definition.
First---and as has been commented on earlier---if~$\omega$ is a classical-deterministic process, then~$W_\omega$ (see Equation~\eqref{eq:classical2quantum}) is a quantum process.
So, the above definition could be adjusted to refer to quantum processes only.
Yet, while in the following we will make extensive use of the properties of classical-deterministic processes, we explicitly define classical-deterministic causal models in reference to classical-deterministic processes.
Second, the commutativity criterion above ensures Markovianity of the quantum process~\cite{barrett2021}.
Naturally, only Markov processes can be faithfully represented as a~causal model;
without Markovianity, the representation of a causal structure as a~directed graph would be meaningless.
Note that not every process admits a description as a~faithful causal model.
This is the case for the initial two-party process~$W^\text{OCB}$~\cite{oreshkov2012}, which does not admit a factorization into commuting Choi operators~$\rho_{\text{Alice}\mid\text{Bob}},\rho_{\text{Bob}\mid\text{Alice}}$.
Simultaneously,~$W^\text{OCB}$ is not unitarily extensible~\cite{arajo2017}.
It is conjectured that these conditions are equivalent~\cite{barrett2021}.

\section{Admissible causal structures}
The definition of causal models allows us to precise the notion of admissible causal structures:
A causal structure {\em admits\/} a quantum realization whenever it can be amended with model parameters to obtain a {\em faithful and consistent\/} causal model.
An {\em inadmissible\/} causal structure is therefore {\em incompatible\/} with local quantum theory.
The requirement of faithfulness ensures that all non-signaling {\em and signaling\/} relations are expressed by the causal structure.
If we were to neglect faithfulness, then {\em every\/} directed graph is trivially admissible:
The constant~(state-preparation) model parameters that provide the qubit~$\ket 0$ to each party~$k$ satisfy {\em any\/} non-signaling requirement.
A relevant graph-theoretic criterion turns out to be the siblings-on-cycles property of a graph~$G$.
\begin{definition}[Siblings-on-cycles graph]
	A directed graph~$G=(V,E)$ is a {\em siblings-on-cycles\/} graph if and only if each directed cycle in~$G$ contains siblings.
	\label{def:soc}
\end{definition}

For our first result---the characterization of {\em inadmissible\/} causal structures---we make use of the following lemma, which ensures that signals can be propagated along sibling-free paths.
Intuitively, the influences on the parties along the path come from the ``previous'' party on the path and ``outside'' parties.
Faithfulness then guarantees that the signal can be sent along the path.
In contrast, if there were siblings, the common parent might block the signal propagation along the path.
\begin{lemma}[Quantum signaling path]
	\label{lemma:signaling}
	Consider a faithful quantum causal model with causal structure~$G=(V,E)$ and model parameters~$\{\rho_{k\mid\Pa(k)}\}_{k\in V}$.
	If~$\pi=(v,\dots,w)$ is a directed path in~$G$ {\em without siblings,}
	then there exist local experiments such that party~$v$ can signal to party~$w$.
\end{lemma}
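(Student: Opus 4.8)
The plan is to propagate a signal step by step along the path $\pi=(v=v_0,v_1,\dots,v_\ell=w)$, using faithfulness at each node to guarantee the link from $v_{i-1}$ to $v_i$ does not get masked, and crucially using the sibling-free hypothesis to ensure that the ``side influences'' entering each $v_i$ from its other parents can be frozen to fixed states without interfering with the chain. First I would set up the local experiments: parties not on $\pi$ are assigned constant (state-preparation) instruments outputting a fixed reference state $\ket 0$; party $v=v_0$ is given a family of preparations encoding a classical bit $s$ into $\mathcal O_{v_0}$; and each intermediate party $v_i$ (and $w$) is given an instrument that reads its incoming system $\mathcal I_{v_i}$, records (part of) it as its classical outcome $a_{v_i}$, and forwards a suitable system into $\mathcal O_{v_i}$. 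The claim is then that $a_w$ depends on $s$.

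The key technical step is to understand the reduced channel seen along the path. Because the causal model is Markovian (commuting Choi operators), I would argue inductively that after fixing the off-path preparations, the state on $\mathcal I_{v_i}$ is obtained from $\{\mathcal O_{v_j}: v_j\in\Pa(v_i)\}$ by feeding constant states into every parent of $v_i$ except its predecessor $v_{i-1}$ on $\pi$ — and here is exactly where sibling-freeness is used: since $\pi$ contains no siblings, no node $v_i$ on the path is a common parent of two path-nodes, so no ``upstream'' output $\mathcal O_{v_j}$ that we need to carry the signal is simultaneously a parent of some later $v_m$ in a way that would force a conflict; more directly, the parents of $v_i$ other than $v_{i-1}$ are never themselves nodes whose inputs we have already committed to carrying the signal, so they can legitimately be supplied the fixed state $\ket 0$. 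Faithfulness of $\rho_{v_i\mid\Pa(v_i)}$ then says this channel is signaling from $v_{i-1}$ (with all other inputs held at $\ket 0$), i.e.\ there is a choice of the two forwarded states at $v_{i-1}$ that yields distinguishable states at $\mathcal I_{v_i}$.

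Concretely, I would proceed as follows. (i) Fix off-path preparations to $\ket 0$ and fix the ``other-parent'' inputs at every path node to $\ket 0$; this turns the map $v_{i-1}\mapsto v_i$ into a fixed CPTP channel $\mathcal E_i$ from $\mathcal O_{v_{i-1}}$ to $\mathcal I_{v_i}$ which, by faithfulness, is signaling. (ii) For each $i$, pick a pair of input states to $\mathcal E_i$ that are sent to distinguishable outputs, and let party $v_i$'s instrument forward whichever of those two states is indicated by a bit it extracts from $\mathcal I_{v_i}$; this builds a relay of bits $s\mapsto b_1\mapsto b_2\mapsto\cdots\mapsto b_\ell$ with $b_\ell=a_w$ correlated to $s$. (iii) Check consistency: the chosen global instrument, together with $W$, must still be a valid process — but this is automatic, since we only ever composed $W$ with CPTP maps, and the generalized Born rule gives an honest probability distribution; the signaling statement is then a statement about that distribution, namely $p(a_w\mid s)$ is non-constant in $s$.

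The main obstacle I anticipate is step (i): rigorously showing that holding the non-predecessor parents of $v_i$ at a fixed state really does reduce the node map to a \emph{signaling} channel from $v_{i-1}$. Faithfulness as defined only says $\rho_{v_i\mid\Pa(v_i)}$ is signaling from each parent $\ell$ to $v_i$ — in principle with the other parents ranging freely. Promoting this to ``signaling from $v_{i-1}$ when the other parents are pinned to $\ket 0$'' needs either a strengthening of the faithfulness definition, or an argument that the $\ket 0$-pinned restriction is still signaling (e.g.\ by choosing the reference state adaptively per node, or by averaging). I expect the paper handles this by working in the classical-deterministic picture first — where ``$\rho_{v_i\mid\Pa(v_i)}$ depends on input $\ell$'' literally gives a pair of values of that one argument, with the rest held fixed, that changes the output — and then lifting to the quantum case via the embedding $\omega\mapsto W_\omega$; reconciling the quantum faithfulness definition with this usage is the delicate point, and the rest is bookkeeping with the link product and induction on $\ell$.
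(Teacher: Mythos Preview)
Your overall structure matches the paper's: fix off-path experiments, reduce to a chain of one-step channels along~$\pi$, and compose. You also correctly flag the real issue---pinning the non-predecessor parents of~$v_i$ to a fixed~$\ket 0$ need not preserve signaling from~$v_{i-1}$.

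The paper's fix is precisely your ``adaptive reference state'' option, not the classical-to-quantum lift you lean towards at the end. Faithfulness of~$\rho_{v_i\mid\Pa(v_i)}$ means, by definition of signaling, that there \emph{exists} a state~$\tau_i\in\mathcal L(\mathcal O_{\Pa(v_i)\setminus\{v_{i-1}\}})$ whose partial contraction yields a signaling CPTP map~$\mathcal L(\mathcal O_{v_{i-1}})\to\mathcal L(\mathcal I_{v_i})$; one simply has the off-path parents prepare that~$\tau_i$, not~$\ket 0$. Here is the part of sibling-freeness you underuse: it says not merely that the only on-path parent of~$v_i$ is~$v_{i-1}$, but that the sets~$\Pa(v_i)$ for different~$v_i\in\pi$ are pairwise disjoint. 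Hence each off-path parent feeds exactly one path node, the~$\tau_i$'s live on disjoint output spaces, and they can be chosen independently and tensored into a single preparation on~$\mathcal O_{\Pa(\pi)\setminus\pi}$. With a uniform~$\ket 0$ this disjointness would be irrelevant; with node-dependent~$\tau_i$'s it is the crux, and it is exactly what sibling-freeness supplies. Your expected resolution via the embedding~$\omega\mapsto W_\omega$ would not prove the lemma as stated, since the hypothesis is an arbitrary faithful \emph{quantum} causal model, not one arising from a classical-deterministic~$\omega$; the paper stays entirely in the quantum setting.
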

\begin{proof}
	As the nodes in~$\pi$ do not have common parents, the quantum map~$W$ is~$\bigotimes_{k\in \pi}\rho_{k\mid\Pa(k)}\prod_{k\not\in \pi}\rho_{k\mid\Pa(k)}$.
	We start by partially fixing the local quantum experiment of each party~$k\not\in \pi$ to discard the input on~$\mathcal I_k$, {\it i.e.,\/}~each such party applies the map~$\rho^{\mu'_k}=\id_{\mathcal I_k}$.
	Because~$\rho_{k\mid\Pa(k)}$ is the Choi operator of a completely positive trace-preserving map~$\mathcal L(\mathcal O_{\Pa(k)})\rightarrow\mathcal L(\mathcal I_k)$,
	we have~$\Tr_{\mathcal I_k}[\rho_{k\mid\Pa(k)}]=\id_{\mathcal O_{\Pa(k)}}$, and the resulting reduced quantum map~$\Tr_{\mathcal I_{V\setminus \pi}}W$ is~$\bigotimes_{k\in \pi} \rho_{k\mid\Pa(k)} \id_{\mathcal O_{\Pa(V\setminus \pi)}}$.
	Note that the causal model we started with is faithful:
	For each party~$k$ and each parent~$\ell\in\Pa(k)$, the map~$\rho_{k\mid\Pa(k)}$ is signaling from~$\ell$ to~$k$.
	In other words, there exists a~quantum state~\mbox{$\tau\in\mathcal L(\mathcal O_{\Pa(k)\setminus\{\ell\}})$} such that~\mbox{$\rho'_{k\mid\ell} := \Tr_{\mathcal O_{\Pa(k)\setminus\{\ell\}}}[ \rho_{k\mid\Pa(k)}\tau^T]$}
	is the Choi operator of a {\em signaling\/} completely positive trace-preserving map from~$\mathcal L(\mathcal O_\ell)$ to~$\mathcal L(\mathcal I_k)$.
	Because of this and because the nodes in~$\pi$ do not have common parents, we can complete the local experiments of all parties~$k\in\Pa(\pi)\setminus \pi$ to prepare a state~$\tau\in\mathcal L(\mathcal O_{\Pa(\pi)\setminus \pi})$ where a signal can be sent from one node to the next along~$\pi$, {\it i.e.,\/} the reduced map is signaling from~$\ell$ to~$k$ for all~$k\in\pi,\ell\in\Pa(k)\cap\pi$.
	Finally, a signaling channel from~$\mathcal L(\mathcal O_v)$ to~$\mathcal L(\mathcal I_w)$ is obtained by implementing an appropriate identity channel for each party~$k\in \pi\setminus \{v,w\}$, and by preparing an arbitrary quantum state for all remaining parties.
\end{proof}

Now, we state and prove our first result~(see~Figure~\ref{fig:admissiblethreepartystructures}):
\begin{figure}
	\centering
	\includegraphics[width=0.4\textwidth]{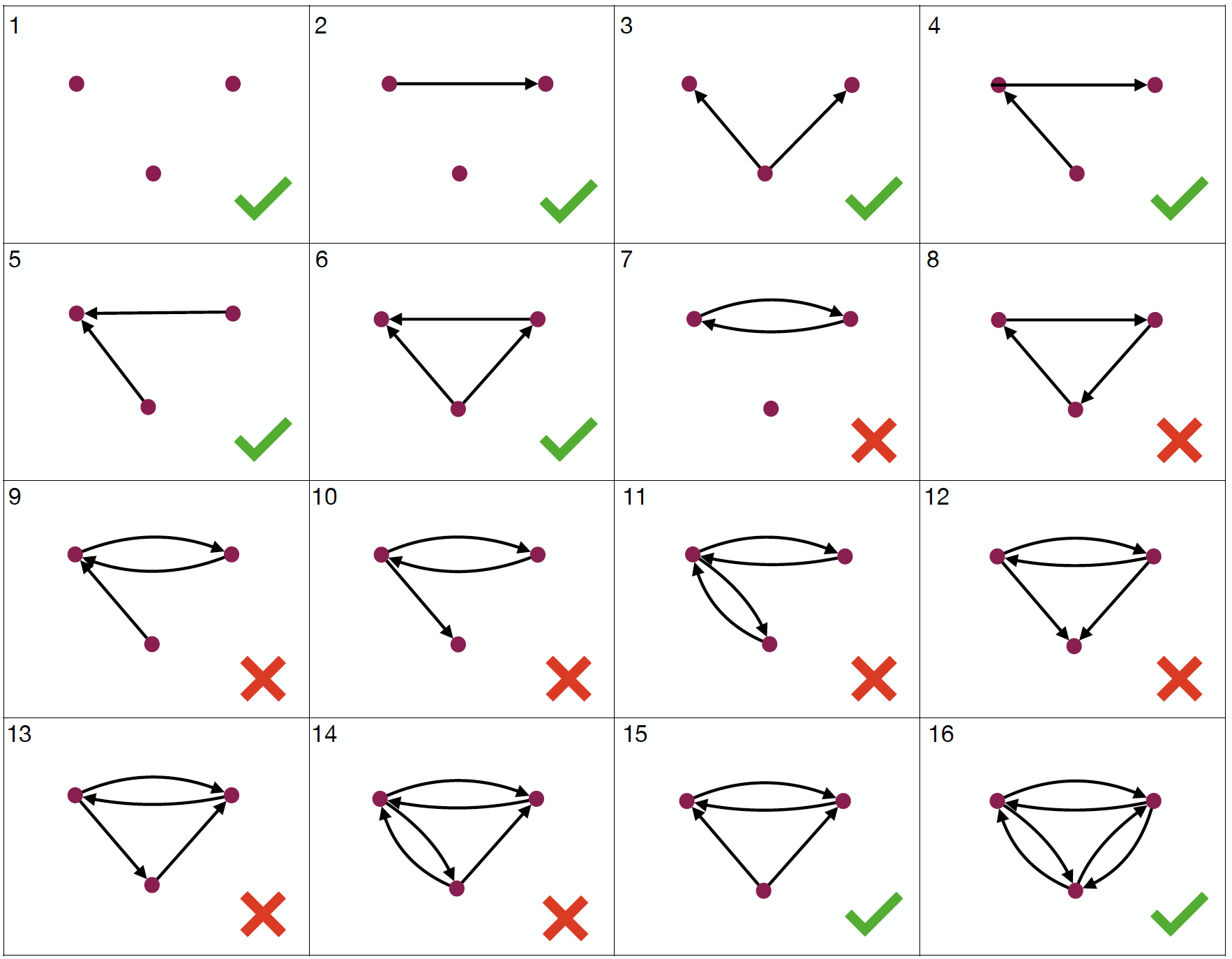}
	\caption{Characterization of all pairwise non-isomorphic causal structures for three parties.
		Graphs 7-14 are inadmissible (Theorem~\ref{thm:inadmissible}).
		In the classical-deterministic case, graph 16 leads to non-causal correlations (Theorem~\ref{thm:noncausal}), and the others to causal correlations only (Theorem~\ref{thm:causal}).
		Graph~16 is also the causal structure of AF/BW process~\cite{af,baumeler2016space}.
		Graph~15 is the causal structure of the quantum switch without a~region in the global future ({\it cf.}~Figure~\ref{fig:introb}).
	}
	\label{fig:admissiblethreepartystructures}
\end{figure}
\begin{theorem}[Inadmissible causal structures]
	\label{thm:inadmissible}
	The causal structure of every faithful and consistent quantum causal model is a siblings-on-cycles graph, or, equivalently,
	if a~graph~$G$ is {\em not\/} a siblings-on-cycles graph, then the causal structure~$G$ is {\em inadmissible.}
\end{theorem}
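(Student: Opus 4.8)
\emph{Proof proposal.} The plan is to prove the contrapositive: if $G$ is not a siblings-on-cycles graph (Definition~\ref{def:soc}), then no faithful and consistent quantum causal model can have causal structure $G$. So suppose $G$ is not siblings-on-cycles; then $G$ contains a directed cycle with no siblings, and I would take a \emph{shortest} such cycle $C=(v_0,\dots,v_\ell)$. A shortest directed cycle is chordless, and any directed cycle built from a subset of $C$'s nodes again has no siblings since $\CoPa$ taken over fewer pairs is contained in $\CoPa(C)=\emptyset$; hence $C$ is chordless and sibling-free. Now assume, for contradiction, that $G$ carries a faithful consistent quantum causal model with parameters $\{\rho_{k\mid\Pa(k)}\}_{k\in\mathbb Z_n}$.

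The next step is to manufacture a closed signaling loop along $C$. The directed path $\pi=(v_0,\dots,v_\ell)$ has the same node set as $C$, so $\CoPa(\pi)=\emptyset$ and $\pi$ is without siblings. By Lemma~\ref{lemma:signaling} there are local experiments---definite for every party except $v_0$ and $v_\ell$---under which $v_0$ signals to $v_\ell$; the construction there fixes each party $k\notin\pi$ to discard its input and each parent in $\Pa(\pi)\setminus\pi$ to prepare a suitable state so that every path link is signaling. Because $C$ is chordless, the only parent of $v_0$ inside $C$ is $v_\ell$; because $C$ has no siblings, every parent of $v_0$ outside $C$ is a parent of no other node of $C$, so fixing these parents cannot disturb the path links of Lemma~\ref{lemma:signaling} (which involve parents of $v_1,\dots,v_\ell$), and Lemma~\ref{lemma:signaling} leaves their states otherwise free. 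Using faithfulness of the edge $v_\ell\to v_0$, I would choose those states so that the reduced model parameter at $v_0$ is signaling from $v_\ell$. Thus a single configuration of experiments makes $v_0$ signal to $v_\ell$ \emph{and} $v_\ell$ signal to $v_0$.

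Plugging the now-definite CPTP maps of all parties $k\notin\{v_0,v_\ell\}$ into the consistent process $W$ yields a valid two-party quantum process $W''$ on $\{v_0,v_\ell\}$ (the reduced process, as in Lemma~\ref{lemma:reduced}). It factorizes as the link product of the path-composition channel $\mathcal O_{v_0}\to\mathcal I_{v_\ell}$ and the reduced closing-edge channel $\mathcal O_{v_\ell}\to\mathcal I_{v_0}$, whose Choi operators live on disjoint tensor factors and hence commute, so $W''$ is Markovian; and by the previous paragraph both channels are signaling. It remains to derive a contradiction, i.e.\ to show that a bipartite Markovian process that signals in both directions cannot satisfy Eq.~\eqref{eq:qprocess}. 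This is the grandfather antinomy around the loop $v_0\to v_\ell\to v_0$: one chooses the gates of $v_0$ and $v_\ell$ so that the signal is routed around the loop and flipped, leaving no consistent assignment and hence $\Tr[W''(\rho^{\mu_{v_0}}\otimes\rho^{\mu_{v_\ell}})]\neq 1$. In the classical-deterministic specialization the two signaling links restrict to non-constant Boolean functions which, composed with the parties' freely chosen functions, realize a negation around the loop, so $\omega\circ\mu$ has no fixed point, contradicting Theorem~\ref{thm:cprocess}.

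I expect the main obstacle to be precisely this last step. The reduction bookkeeping is routine once one notes that the no-siblings hypothesis forces every off-cycle node to be a parent of at most one cycle node, so the state-fixings for the forward path and for the backward edge never collide. The genuine difficulty is upgrading ``both links signal''---where the signaling supplied by faithfulness may be arbitrarily weak---into an \emph{exact} violation of the normalization Eq.~\eqref{eq:qprocess}: in the deterministic case this is immediate, but in the quantum case one must argue that even a faint two-way influence around a closed loop necessarily produces a probability defect, e.g.\ via a measure-and-re-prepare choice of gates whose closed-loop map admits no fixed state, or equivalently via a superoperator-trace computation for the cyclic link product.
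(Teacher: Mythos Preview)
Your overall strategy is sound, and your careful separation into the forward path $(v_0,\dots,v_\ell)$ plus the closing edge $(v_\ell,v_0)$ is arguably more rigorous than the paper's somewhat informal invocation of Lemma~\ref{lemma:signaling} on a cycle. However, you stop one step too early and thereby manufacture the very ``obstacle'' you then worry about. The paper reduces not to a \emph{two}-party process on $\{v_0,v_\ell\}$ but all the way to a \emph{single}-party process on~$v_0$: once both the forward composite $\mathcal O_{v_0}\to\mathcal I_{v_\ell}$ and the closing channel $\mathcal O_{v_\ell}\to\mathcal I_{v_0}$ are signaling, fix one more experiment---let $v_\ell$ measure with a POVM that distinguishes two outputs of the forward channel and re-prepare one of two states that the closing channel sends to distinct outputs. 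The resulting composite $\mathcal O_{v_0}\to\mathcal I_{v_0}$ is then signaling, and a signaling single-party ``process'' is immediately invalid (this is the elementary fact, cited in the paper as~\cite{oreshkov2012}, that a one-party process must have the form $\rho_{\mathcal I_{v_0}}\otimes\id_{\mathcal O_{v_0}}$). This dissolves your concern about ``arbitrarily weak'' signaling: any two signaling CPTP maps can be composed, via a measure-and-prepare intermediary, into a signaling map---no closed-loop fixed-point or superoperator-trace analysis is needed.

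A second simplification: neither the shortest-cycle restriction nor chordlessness is required. The paper applies the construction of Lemma~\ref{lemma:signaling} around the whole cycle at once; the off-cycle parents of the various cycle nodes are pairwise disjoint precisely because $\CoPa(C)=\emptyset$, so all state preparations can be fixed simultaneously without conflict, regardless of chords. Your separation into forward path and back edge---and the chordlessness you invoke to keep that separation clean---is therefore avoidable.
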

\begin{proof}
	We prove the latter formulation of the theorem.
	Assume that~$G$ contains a directed cycle~\mbox{$C=(v,\dots,w)$} {\em without siblings,} and let~$(G,\{\rho_{k\mid\Pa(k)}\}_{k\in V})$ be an arbitrary faithful quantum causal model with causal structure~$G$.
	By Lemma~\ref{lemma:signaling}, there exist local quantum experiments for all parties except~$v$ such that the reduced process~$W':=\rho_{v\mid v}$ is the Choi operator of a {\em signaling\/} completely positive trace-preserving map from~$\mathcal L(\mathcal O_v)$ to~$\mathcal L(\mathcal I_v)$.
	This, however, is not a single-party process~\cite{oreshkov2012}: There exists a~map~\mbox{$\mu_v\in\cptp_v$} such that~$\Tr[W'\rho^{\mu_v}]\neq 1$.
\end{proof}

For our second result concerning the admissibility of causal structures, we give a construction of model parameters for any causal structure.
We conjecture that these model parameters {\em always\/} give rise to {\em consistent\/} causal models when the relevant causal structure is a~siblings-on-cycles graph.

\begin{definition}[Model parameters]
	\label{def:modelparameters}
	Let $G=(V,E)$ be a~directed graph.
	For each party~$k\in V$, define the input space~$\mathcal I_{k}:=\mathbb Z_2$, the output space~$\mathcal O_{k}:=\Ch(k)\cup\{\bot\}$, and the model parameters~$\{\rho_{k\mid\Pa(k)}:\mathcal O_{\Pa(k)}\rightarrow\mathcal I_k\}_{k\in V}$ with
	\begin{align}
		\rho_{k\mid\Pa(k)}:
		\bigtimes_{\ell\in\Pa(k)}
		\left(
		\Ch(\ell)\cup\{\bot\}
		\right)
		&\rightarrow
		\mathbb Z_2\\
		(t_\ell)_{\ell\in\Pa(k)}
		&\mapsto
		\prod_{\ell\in\Pa(k)}[k=t_\ell]
		\,.
	\end{align}
	The bottom element~$(\bot)$ is a special element not contained in~$V$.
\end{definition}
These model parameters are such that each party~$\ell$ can select {\em at most\/} one of its children~$k\in\Ch(\ell)$.
If all parents of party~$k$ select~$k$, then party~$k$ receives one, and zero otherwise.
Note that this causal model is {\em faithful:}
Party~\mbox{$\ell\in\Pa(k)$} can signal a bit to party~$k$ whenever~$k$ is selected by all parents except~$\ell$, {\it i.e.,} with~$t_{\ell'}=k$ for all~$\ell'\in\Pa(k)\setminus\{\ell\}$, we have~\mbox{$\rho_{k\mid\Pa(k)}(k,\dots,k,t_\ell) = [k = t_\ell]$}.
The bottom element~$\bot$ in the output space of the parties is required to ensures faithfulness in the case~$|\Ch(\ell)|=1$.

Suppose we amend a siblings-on-cycles graph~$G$ with these model parameters.
If~$C$ is a directed cycle in the graph~$G$, then these model parameters interrupt the signal progression along the cycle~$C$, and therefore, no party can send a~signal to herself.
To see this, let~$i,j\in C$ be siblings and~$p\in\CoPa(\{i,j\})$.
If the common parent~$p$ selects party~$j=:t_p$, then~$\rho_{i\mid\Pa(i)}(t_p,t_{\Pa(i)\setminus\{p\}}) = 0$, {\it i.e.,\/}~party~$i$ receives the constant zero, and therefore, the parent of~$i$ along the cycle~$C$ cannot send a signal to~$i$.
Similarly, the signal progression along the cycle~$C$ is interrupted at party~$j$ if the common parent~$p$ selects~$i$.
This motivates the following conjecture:
\begin{conjecture}[Admissible causal structures]
	\label{conj:admissible}
	If~\mbox{$G=(V,E)$} is a siblings-on-cycles graph, then the causal structure~$G$ equipped with the model parameters of Definition~\ref{def:modelparameters} forms a {\em consistent\/} causal model.
\end{conjecture}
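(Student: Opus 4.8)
\emph{Plan.} By Theorem~\ref{thm:cprocess} it suffices to show that for every choice of local functions $\mu=(\mu_k)_{k\in V}$, $\mu_k:\mathbb Z_2\to\Ch(k)\cup\{\bot\}$, the map $F:=\omega\star\mu$, given explicitly by $F(r)_k=\prod_{\ell\in\Pa(k)}[\mu_\ell(r_\ell)=k]$, has a \emph{unique} fixed point (faithfulness is already established above the conjecture). I would work with the reformulation in terms of \emph{selections}: at a fixed point each party~$\ell$ picks the single value $s_\ell:=\mu_\ell(r_\ell)\in\Ch(\ell)\cup\{\bot\}$, and then $r_k=1$ if and only if \emph{every} parent of~$k$ selects~$k$. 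The only structural fact I expect to use about the graph, besides Definition~\ref{def:soc}, is the immediate consequence that two siblings $i\neq j$ with common parent~$p$ can never both satisfy $r_i=r_j=1$, since~$p$ selects at most one of them; hence on every directed cycle the nodes cannot all be simultaneously ``selected by all their parents''. The base case $|V|=1$ is trivial (a single node with a self-loop is not a siblings-on-cycles graph, so the node is a source and $r\equiv1$ uniquely).

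\emph{Reduction to the strongly connected case.} I would induct on $|V|$. If~$G$ has at least two strongly connected components, process them along a topological order of the condensation. A source component~$S$ has only internal parents, so the equations restricted to~$S$ are exactly those of Definition~\ref{def:modelparameters} for $G[S]$: external children of $S$-nodes never equal an $S$-node and thus behave like~$\bot$ for the internal fixed point, so the induction hypothesis pins $r|_S$ down uniquely. For a later component~$S$, the already-determined selections of earlier components can only \emph{kill} nodes of~$S$ (force them to~$0$). Closing the killed set~$K$ under ``some killed parent selects elsewhere'' and setting $S':=S\setminus K$, one checks that the equations on~$S'$ are again those of Definition~\ref{def:modelparameters} for $G[S']$, and --- the crucial point --- that $G[S']$ is still a siblings-on-cycles graph: a common parent of two distinct cycle nodes of $G[S']$ lying outside~$S'$ would have to select both of them, which is impossible. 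The induction hypothesis then determines $r|_{S'}$, and the killed nodes are fixed to~$0$. What remains is the case in which~$G$ is strongly connected.

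\emph{The strongly connected case.} Such a~$G$ contains a directed cycle, hence siblings; pick siblings $i\neq j$ with common parent~$p$. If $\mu_p$ is constant, $p$ selects a fixed value~$c$ that kills at least one of $i,j$; closing up the killed set exactly as above breaks strong connectivity and drops to a strictly smaller siblings-on-cycles graph, where the induction applies. The genuinely hard case is when every such ``conflict node'' carries a \emph{non-constant} local function: then $p$'s selection depends on~$r_p$, which is itself determined inside the component, so no node is unconditionally killed and the recursion stalls. Here I would attempt \emph{uniqueness} via a difference-set argument --- if $r\neq r'$ are fixed points and $D:=\{k:r_k\neq r'_k\}$, then every node of~$D$ has a disagreeing parent in~$D$, so $G[D]$ has minimum in-degree at least one and therefore contains a directed cycle; by Definition~\ref{def:soc} that cycle has siblings, and one shows their common parent lies in~$D$ as well --- and \emph{existence} via an iterative ``force all forced values, then resolve one sibling conflict'' procedure whose stalling set likewise contains such a cycle.

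\emph{Main obstacle.} The difficulty --- which I expect to be the real content, and is presumably why the statement is only conjectured and established in the paper under an additional restriction --- is that both arguments above produce a \emph{new} element of the bad set ($p\in D$ for uniqueness, $p$ in the stalling set for existence) without any termination in sight: the sibling property only tells us that \emph{at least one} of the two siblings is~$0$, and converting this disjunction into a deterministic reduction, uniformly over all non-constant choices of $\mu_p$ on a strongly connected siblings-on-cycles graph, appears to require either a genuinely new combinatorial idea or a structural hypothesis on the graph. Everything else --- the reduction to single components, the handling of killed nodes, the preservation of the siblings-on-cycles property under these operations, and the constant-$\mu_p$ subcase --- I expect to be routine.
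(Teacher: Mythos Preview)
Your proposal does not claim to be a complete proof, and rightly so: the statement is a \emph{conjecture} in the paper, and the paper does not prove it in full. What the paper establishes is the restricted Theorem~\ref{thm:admissiblerestricted} (chordless siblings-on-cycles graphs), together with numerical verification up to six nodes. So there is no ``paper's own proof'' to compare against here; the relevant comparison is with the paper's partial result.

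Your reduction steps are sound. The decomposition along the condensation, the ``killing'' closure, and the verification that the surviving induced subgraph $G[S']$ is again a siblings-on-cycles graph (because a determined common parent outside $S'$ would have to select two distinct children simultaneously) all go through. The translation of the fixed-point equations to those of Definition~\ref{def:modelparameters} on $G[S']$ is likewise correct once one replaces out-of-$S'$ selections by $\bot$. So the content genuinely reduces to the strongly connected case with non-constant $\mu_p$ at every available common parent.

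The gap you flag in that case is real and is exactly where the paper stops. In the chordless setting the paper runs your difference-set argument verbatim: implications~\eqref{eq:fpparent} and~\eqref{eq:fpcommonparent} are precisely ``every disagreeing node has a disagreeing parent'' and ``the common parent of two disagreeing siblings disagrees''. What supplies termination there is Lemma~\ref{lemma:subgraphs}: in a \emph{chordless} siblings-on-cycles graph, every common parent of a cycle~$C$ lies outside~$C$, and the induced graph on $\CoPa(C)\cup\Anc(\CoPa(C))$ is a strictly smaller chordless siblings-on-cycles graph disjoint from~$C$. This yields an infinite descent on the size of a subgraph carrying a ``disagreeing cycle''. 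Without chordlessness the common parent may sit on the cycle itself (as a chord), no strictly smaller witness is produced, and the argument stalls---precisely as you describe. One economy you could adopt from the paper: it does not prove existence separately but invokes Theorem~\ref{thm:equivalence} (equivalence of the two antinomies), so that ruling out two fixed points automatically yields at least one; this would spare you the parallel ``stalling set'' branch for existence.
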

Note that the above is {\em not\/} a proof of this conjecture.
The problem is that the value~$t_p$ might depend on the interventions of other parties, and in particular, on the interventions of its children.

We have numerically tested this conjecture for all siblings-on-cycles graphs with up to six nodes.
For these tests, we employed the following recursive function~$\alpha$
\begin{align}
	\alpha:V\times V^* &\rightarrow \{0,1\}\\
	\begin{split}
		\left(k,\pi=(v_1,\dots,v_m)\right) &\mapsto \left[k\not\in\pi\right]\times\\
		\prod_{\ell\in\Pa(k)}\bigg[k = \mu_\ell\Big(&\alpha\big(\ell,(k,v_1,\dots,v_m)\big)\Big)\bigg]
		\,,
	\end{split}
\end{align}
where~$V^*$ is the set of all finite sequences of vertices.
We verified that~\mbox{$\hat i:=(i_k=\alpha(k))_{k\in V}$} is the fixed point of the map~$\omega\circ\mu$.
Note that~$\alpha(k)$ for~$k\in V$ is well-defined---the recursion terminates after finitely many invocations of~$\alpha$---because the directed graphs are finite, and whenever a~vertex is re-visited,~$\alpha$ returns~$0$.
For these tests, we wrote two C programs; one for generating siblings-on-cycles graph, and another to verify the fixed point.
The source code of these programs can be found in Ref.~\cite{code}.
We required 132 800MHz-CPU-seconds for the generation of the causal structures, and around 31 800MHz-CPU-hours for the verification of the admissibility of the model parameters.

Because every faithful and consistent classical-deterministic causal model can be lifted to a quantum one~\cite{baumeler2021}, this conjecture has as immediate consequence the completion of Theorem~\ref{thm:inadmissible} with its converse:
\begin{corollary}
	If Conjecture~\ref{conj:admissible} holds, then
	the causal structure~$G$ is {\em admissible\/} if and only if~$G$ is a {\em siblings-on-cycles\/} graph.
\end{corollary}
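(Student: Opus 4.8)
The plan is to prove the two implications of the biconditional separately; both are obtained by stitching together results already in place, so the corollary is essentially a packaging of Theorem~\ref{thm:inadmissible} and Conjecture~\ref{conj:admissible}.

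\emph{Admissible $\Rightarrow$ siblings-on-cycles.} I would first observe that this direction is exactly Theorem~\ref{thm:inadmissible} and does not use the conjecture at all: if $G$ is admissible, then by definition there is a faithful and consistent quantum causal model whose causal structure is $G$, and Theorem~\ref{thm:inadmissible} asserts that the causal structure of any such model is a siblings-on-cycles graph. Equivalently, the contrapositive form of Theorem~\ref{thm:inadmissible} already says that a graph that is not siblings-on-cycles is inadmissible.

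\emph{Siblings-on-cycles $\Rightarrow$ admissible.} Here I would assume Conjecture~\ref{conj:admissible} and let $G=(V,E)$ be a siblings-on-cycles graph. Equip $G$ with the model parameters of Definition~\ref{def:modelparameters}; by the conjecture the resulting classical-deterministic causal model is consistent, i.e., the classical map $\omega$ it defines is a classical-deterministic process in the sense of Theorem~\ref{thm:cprocess}. Next I would note that this model is also faithful: as argued immediately below Definition~\ref{def:modelparameters}, for every $k\in V$ and every $\ell\in\Pa(k)$, choosing $t_{\ell'}=k$ for all $\ell'\in\Pa(k)\setminus\{\ell\}$ collapses $\rho_{k\mid\Pa(k)}$ to the map $t_\ell\mapsto[k=t_\ell]$, which signals from $\ell$ to $k$ (the bottom element $\bot$ in the output spaces is precisely what keeps this signaling nontrivial when $|\Ch(\ell)|=1$). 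Thus $G$ carries a faithful and consistent classical-deterministic causal model. Finally, invoking the lift $\omega\mapsto W_\omega$ of Equation~\eqref{eq:classical2quantum} together with the fact (recorded in the text, following Ref.~\cite{baumeler2021}) that a faithful and consistent classical-deterministic causal model lifts to a faithful and consistent quantum causal model on the same underlying graph, I obtain a quantum realization of $G$; hence $G$ is admissible. Combining the two directions yields the biconditional.

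There is no genuine obstacle: the entire mathematical content lives in Theorem~\ref{thm:inadmissible} and in Conjecture~\ref{conj:admissible}. The one point I would state explicitly is that the classical-to-quantum lift preserves \emph{both} faithfulness and the pairwise commutativity of the Choi operators (Markovianity), so that $W_\omega$ is a genuine faithful quantum causal model and not merely a quantum process; this is immediate, since the lift sends each deterministic $\rho_{k\mid\Pa(k)}$ to a diagonal Choi operator that inherits its input-by-input signaling and that commutes with all the others by construction.
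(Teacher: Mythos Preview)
Your proposal is correct and follows essentially the same approach as the paper, which presents the corollary as an immediate consequence of Theorem~\ref{thm:inadmissible} together with Conjecture~\ref{conj:admissible} and the classical-to-quantum lift of Ref.~\cite{baumeler2021}, without giving a separate proof. Your write-up simply makes explicit the faithfulness and Markovianity of the lifted model, which the paper leaves implicit.
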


At present, we manage to prove a restricted form of Conjecture~\ref{conj:admissible} for a subset of siblings-on-cycles graphs, namely for all such graphs where every directed cycle is {\em induced.}
We call such graphs {\em chordless siblings-on-cycles graphs.}
\begin{theorem}[Admissible causal structures (chordless)]
	\label{thm:admissiblerestricted}
	If ~\mbox{$G=(V,E)$} is a chordless siblings-on-cycles graph, then the causal structure $G$ equipped with the model parameters of Definition~\ref{def:modelparameters} forms a {\em consistent\/} causal model.
\end{theorem}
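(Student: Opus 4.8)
The plan is to show that for any chordless siblings-on-cycles graph $G$ equipped with the model parameters of Definition~\ref{def:modelparameters}, the induced classical map $\omega$ satisfies Eq.~\eqref{eq:cprocess}: for every choice of local experiments $\{\mu_k:\mathcal I_k\rightarrow\mathcal O_k\}$, the composite map $\omega\circ\mu$ has a unique fixed point. Since each $\mathcal I_k=\mathbb Z_2$, a fixed point is an assignment of a bit $i_k$ to every party, self-consistent under the rule that $i_k=\prod_{\ell\in\Pa(k)}[k=t_\ell]$, where $t_\ell=\mu_\ell^0(i_\ell)\in\Ch(\ell)\cup\{\bot\}$ is the child that party $\ell$ ``points at'' given its received bit. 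I would first recast the fixed-point condition purely combinatorially: once each party $\ell$ has received its bit $i_\ell$, its output $t_\ell$ is determined, so the data $(t_\ell)_{\ell\in V}$ defines a \emph{pointer function} on the vertex set; the bit $i_k$ must then equal $1$ precisely when \emph{every} parent of $k$ points at $k$, and $0$ otherwise. The subtlety is that $t_\ell$ depends on $i_\ell$, which depends on the pointers of $\ell$'s parents, so this is genuinely a fixed-point equation, not an explicit assignment.

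The key idea is to use the recursive function $\alpha$ introduced just before the theorem. I would \textbf{define the candidate fixed point} by $\hat\imath_k:=\alpha(k,())$ and argue in two parts. \emph{Well-definedness}: $\alpha$ terminates because any path-argument that revisits a vertex immediately returns $0$, and the graph is finite, so only simple paths are explored; this is already noted in the excerpt. \emph{It is a fixed point}: unwinding one level of the recursion, $\alpha(k,\pi)$ for $k\notin\pi$ equals $\prod_{\ell\in\Pa(k)}[k=\mu_\ell^0(\alpha(\ell,(k)\!::\!\pi))]$. The claim is that whenever $k\notin\pi$, $\alpha(k,\pi)=\alpha(k,())=\hat\imath_k$; granting this, plugging $\pi=()$ gives $\hat\imath_k=\prod_{\ell\in\Pa(k)}[k=\mu_\ell^0(\hat\imath_\ell)]$, which is exactly $\hat\imath=\omega(\mu(\hat\imath))$. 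To prove $\alpha(k,\pi)$ is independent of the history $\pi$ (as long as $k\notin\pi$) I would induct on the structure of the recursion: the only way the history influences the output is through the guards $[\,\cdot\notin\pi\,]$, and here the chordless hypothesis enters — in a chordless siblings-on-cycles graph, any directed cycle through $k$ already contains a pair of siblings, so when the recursion follows parent-edges backwards from $k$ and would close a cycle, the sibling structure forces a factor $0$ somewhere along that cycle \emph{before} the history-guard could matter, making the two computations agree.

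For \textbf{uniqueness}, suppose $i$ is any fixed point of $\omega\circ\mu$; I want $i=\hat\imath$. I would argue that $i_k=1$ forces a strong ``incoming consensus'': every parent $\ell$ of $k$ has $t_\ell=k$, hence $i_\ell=1$ as well (each such $\ell$, if it has parents, must itself have had all parents point at it — but wait, this requires care, since a source parent $\ell$ with $i_\ell$ determined by the empty product gives $i_\ell=1$ automatically, consistent). Tracing the relation $i_k=1\Rightarrow i_\ell=1$ for all $\ell\in\Pa(k)$ backwards, the set $S:=\{k: i_k=1\}$ is \emph{ancestrally closed} and moreover each vertex in $S$ has \emph{all} its parents pointing into $S$ at it; if $S$ contained a directed cycle we would get, via the siblings-on-cycles property and the structure of the model parameters (the common parent $p$ of two cycle-siblings $i,j$ can point at only one of them, so the other receives $0$ and is \emph{not} in $S$), a contradiction — so $G[S]$ is acyclic, hence has a topological order, and along that order one checks inductively that $\alpha$ reproduces exactly the bits prescribed by the fixed-point equations, giving $i=\hat\imath$. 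The symmetric case $i_k=0$ is handled by the complementary count of pointers. The main obstacle I anticipate is the history-independence step for $\alpha$: making precise why chordlessness guarantees that cutting off revisited vertices never discards a ``live'' branch that a different history would have explored — this is where the argument is delicate, and I expect it to require a careful case analysis on whether the first repeated vertex on a backward-traced walk participates in a sibling pair on the corresponding cycle.
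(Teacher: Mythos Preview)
Your uniqueness argument has a genuine gap. You write ``every parent $\ell$ of $k$ has $t_\ell=k$, hence $i_\ell=1$ as well,'' but the ``hence'' does not follow: $t_\ell=\mu_\ell(i_\ell)=k$ constrains only the \emph{output} of $\mu_\ell$, and since $\mu_\ell$ is an arbitrary function it may well send $0$ to $k$. Thus $S=\{k:i_k=1\}$ is not ancestrally closed in general, and the remainder of that paragraph (acyclicity of $G[S]$, topological induction) collapses. Tracking where a \emph{single} fixed point equals $1$ simply does not propagate backwards through these model parameters.

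The paper's proof sidesteps both this difficulty and your acknowledged hard step (history-independence of $\alpha$); in fact $\alpha$ appears in the paper only for the numerical tests, not in the proof. The paper never constructs a fixed point. Instead it invokes Theorem~\ref{thm:equivalence}: if $\omega$ failed to be a process, some choice of experiments would produce \emph{two} distinct fixed points $r,r'$. One then proves (a)~$r_i\neq r'_i$ forces some parent $p\in\Pa(i)$ with $r_p\neq r'_p$ --- this \emph{does} propagate, since equal parent-bits give equal products --- and (b)~if $r_i\neq r'_i$ and $r_j\neq r'_j$ with $i\neq j$, then every $p\in\CoPa(\{i,j\})$ has $r_p\neq r'_p$. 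From~(a) and finiteness the ``difference set'' contains a directed cycle; from~(b) together with Lemma~\ref{lemma:subgraphs}, an extremality argument on a minimal $V_{\min}$ such that $G[V_{\min}]$ is a chordless siblings-on-cycles graph containing a cycle of differences yields a strictly smaller such set, a contradiction. The right object to trace backwards is not $\{k:i_k=1\}$ but $\{k:r_k\neq r'_k\}$ --- and once you adopt that viewpoint, Theorem~\ref{thm:equivalence} delivers existence for free.
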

Before we prove this statement, we need to establish the following lemma.
\begin{lemma}
	\label{lemma:subgraphs}
	Let~$G=(V,E)$ be a chordless siblings-on-cycles graph.
	If~$G$ contains a~directed cycle~$C$, then the induced graph~$G[V']$ with~$V'= \CoPa(C)\cup \Anc(\CoPa(C))$ is a chordless siblings-on-cycles graph,~$V'$ is non-empty, and a strict subset of~$V$.
\end{lemma}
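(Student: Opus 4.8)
The plan is to dispatch the three assertions in turn; non-emptiness and the claim that $G[V']$ is again a chordless siblings-on-cycles graph are short, and strictness is the crux. Non-emptiness is immediate: $C$ is a directed cycle of the siblings-on-cycles graph $G$, hence contains siblings, so $\CoPa(C)\neq\emptyset$ and thus $V'\supseteq\CoPa(C)\neq\emptyset$. For the induced-subgraph claim, the key observation is that $V'$ is closed under taking ancestors---if $v\in V'$, every ancestor of $v$ is again an ancestor of some element of $\CoPa(C)$ (or lies in $\CoPa(C)$), hence is in $V'$---so in particular $\Pa(v)\subseteq V'$ for all $v\in V'$ (parents are ancestors, as $G$ has no self-loops). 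Therefore no parent relation is lost in passing to $G[V']$: $\Pa_{G[V']}(v)=\Pa(v)$ for $v\in V'$, whence $\CoPa_{G[V']}(S)=\CoPa(S)$ for every $S\subseteq V'$. Any directed cycle $C'$ of $G[V']$ is also a directed cycle of $G$, so its induced graph in $G$---equal to its induced graph in $G[V']$ since $C'\subseteq V'$---is a directed cycle graph, i.e.\ $C'$ is chordless in $G[V']$; and $\CoPa_{G[V']}(C')=\CoPa(C')\neq\emptyset$ by the siblings-on-cycles property of $G$. Hence $G[V']$ is a chordless siblings-on-cycles graph.

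For strictness, write $S_C$ for the strongly connected component of $G$ containing $C$. The heart of the matter is the claim
\begin{align}
	\CoPa(C)\cap S_C=\emptyset\,;
	\label{eq:key}
\end{align}
granting this, strictness follows quickly. Since $C$ is chordless, a node of $C$ has out-degree one in $G[C]$ and so is not a common parent of two distinct cycle nodes, i.e.\ $\CoPa(C)\cap C=\emptyset$. Now assume, for contradiction, $V'=V$, and fix $c\in C$; then $c\notin\CoPa(C)$ forces $c\in\Anc(\CoPa(C))$, so there is a directed path from $c$ to some $q\in\CoPa(C)$. As $q$ is a parent of some cycle node $c'$ and there is a directed path from $c'$ to $c$ along $C$, the node $q$ lies on a directed closed walk through $c$, so $q\in S_C$---contradicting \eqref{eq:key}. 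Since $V'\subseteq V$ always, this gives $V'\subsetneq V$.

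It remains to prove \eqref{eq:key}, which I would do by building an auxiliary cycle that carries a chord. Suppose $p\in\CoPa(C)\cap S_C$; as above $p\notin C$, and $p$ has at least two distinct children on $C$. Because $p\in S_C$ there is a directed path from $C$ to $p$; pick a \emph{shortest} such path $Q=(y_0,\dots,y_r=p)$ with $y_0\in C$ and $r\geq1$, so that $y_1,\dots,y_{r-1}\notin C$. Write $C=(c_0,\dots,c_{\ell-1})$ with cycle edges $c_t\to c_{t+1}$ (indices modulo $\ell$) and $y_0=c_s$, and let $c_{s+m}$ be the first node that is a child of $p$ as one traverses $c_{s+1},c_{s+2},\dots,c_{s+\ell}=c_s$; having two distinct children on $C$ forces $1\leq m\leq\ell-1$ and the existence of a further child $c_{s+m'}$ with $m<m'\leq\ell$. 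Then
\begin{align}
	\Gamma:=\bigl(p,\,c_{s+m},\,c_{s+m+1},\,\dots,\,c_{s+\ell}=c_s,\,y_1,\,\dots,\,y_{r-1}\bigr)
	\label{eq:gamma}
\end{align}
is a directed cycle of $G$ of length at least three: its cycle nodes are distinct because $m\geq1$, they are disjoint from $y_1,\dots,y_{r-1}$ because $Q$ is shortest, consecutive pairs are edges of $G$, and it closes via $y_{r-1}\to p$ (via $c_s\to p$ when $r=1$). But $c_{s+m'}$ is a vertex of $\Gamma$ while the cycle edge out of $p$ goes to $c_{s+m}\neq c_{s+m'}$, so the edge $p\to c_{s+m'}$ of $G$ is a chord of $\Gamma$; thus $G[\Gamma]$ is not a directed cycle graph, contradicting the chordlessness of every directed cycle of $G$ and proving \eqref{eq:key}.

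I expect the only genuinely delicate point to be the bookkeeping around \eqref{eq:gamma}: one must anchor the shortest path at a cycle node $c_s$ and then take the \emph{first} child of $p$ encountered going around $C$ from $c_{s+1}$, precisely so that a second child of $p$ necessarily sits on $\Gamma$ and hence furnishes the chord; everything else is routine verification. It is also worth noting that \eqref{eq:key} uses only that every directed cycle of $G$ is chordless, while the siblings-on-cycles hypothesis enters merely through non-emptiness and the siblings part of the induced-subgraph claim.
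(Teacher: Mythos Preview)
Your proof is correct and follows essentially the paper's approach: both hinge on showing that no directed path runs from $C$ to a common parent $p$ (equivalently, your claim $\CoPa(C)\cap S_C=\emptyset$), by exhibiting a cycle through $p$ for which the extra edge from $p$ to a second child on $C$ is a chord. The paper concludes $V'\cap C=\emptyset$ directly and handles the chord via a case split on the anchor's position relative to the two siblings, whereas you package strictness as a contradiction through the strongly connected component and use the shortest-path/first-child device to avoid the case analysis---but the underlying idea is the same, and your bookkeeping is in fact the more careful of the two.
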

\begin{proof}
	Let~$C=(c_0,c_1,\dots,c_{m-1})$ be a directed cycle in~$G$.
	This cycle has siblings~$c_i,c_j\in C$ with common parent~$p\in\Pa(c_i)\cap\Pa(c_j)$.
	Because~$C$ is induced, we have~\mbox{$p\not\in C$}~(see~Figure~\ref{fig:lemmaa}).
	\begin{figure}
		\centering
		\begin{subfigure}[b]{0.15\textwidth}
			\centering
			\begin{tikzpicture}
				\def\radius{.8}
				\def\angp{220-45}
				\def\angi{180-45}
				\def\angj{-45}
				\def\dd{3}
				\tikzstyle{vertex} = [fill,draw,circle,minimum size=\dd,inner sep=0pt]
				\node[vertex,label=\angp:{$p$}] (p) at (\angp:\radius) {};
				\node[vertex,label=\angi:{$c_i$}] (i) at (\angi:\radius) {};
				\node[vertex,label=\angj:{$c_j$}] (j) at (\angj:\radius) {};
				\draw[-stealth] (p) -- (j);
				\draw[-stealth] (\angp-\dd:\radius) arc(\angp-\dd:\angi+\dd:\radius);
				\draw[dashed,-stealth] (\angi-\dd:\radius) arc(\angi-\dd:\angj+\dd:\radius);
				\draw[dashed,-stealth] (\angj+360-\dd:\radius) arc(\angj+360-\dd:\angp+\dd:\radius);
			\end{tikzpicture}
			\caption{}
			\label{fig:lemmaa}
		\end{subfigure}
		\hfill
		\begin{subfigure}[b]{0.15\textwidth}
			\centering
			\begin{tikzpicture}
				\def\radius{.8}
				\def\angi{180-45}
				\def\angj{-45}
				\def\angl{270-45}
				\def\dd{3}
				\tikzstyle{vertex} = [fill,draw,circle,minimum size=\dd,inner sep=0pt]
				\node[vertex,label=-\angl:{$c_\ell$}] (l) at (\angl:\radius) {};
				\node[vertex,label=\angi:{$c_i$}] (i) at (\angi:\radius) {};
				\node[vertex,label=\angj:{$c_j$}] (j) at (\angj:\radius) {};
				\node[vertex,label=\angl:{$p$}] (p) at (\angl:2*\radius) {};
				\draw[dashed,-stealth] (\angi-\dd:\radius) arc(\angi-\dd:\angj+\dd:\radius);
				\draw[dashed,-stealth] (\angj+360-\dd:\radius) arc(\angj+360-\dd:\angl+\dd:\radius);
				\draw[dashed,-stealth] (\angl-\dd:\radius) arc(\angl-\dd:\angi+\dd:\radius);
				\draw[dashed,-stealth] (l) -- (p);
				\draw[-stealth] (p) to[out=\angl+180+45,in=\angi+90-30] (i);
				\draw[-stealth] (p) to[out=\angl+180-45,in=\angj+270+30] (j);
			\end{tikzpicture}
			\caption{}
			\label{fig:lemmab}
		\end{subfigure}
		\hfill
		\begin{subfigure}[b]{0.15\textwidth}
			\centering
			\begin{tikzpicture}
				\def\radius{.8}
				\def\angi{180-45}
				\def\angj{-45}
				\def\angl{270-45}
				\def\dd{3}
				\tikzstyle{vertex} = [fill,draw,circle,minimum size=\dd,inner sep=0pt]
				\tikzstyle{vertices} = [fill=white,draw,circle,minimum size=\dd,inner sep=0pt]
            	\node[vertex,label=\angi:{$c_i$}] (i) at (\angi:\radius) {};
				\node[vertex,label=\angj:{$c_j$}] (j) at (\angj:\radius) {};
			
				\node[vertices] (p) at (\angl:2*\radius) {$V'$};
				\draw[dashed,-stealth] (\angi-\dd:\radius) arc(\angi-\dd:\angj+\dd:\radius);
				\draw[dashed,-stealth] (\angj+360-\dd:\radius) arc(\angj+360-\dd:\angi+\dd:\radius);
				\draw[dashed,-stealth] (\angl-\dd:\radius) arc(\angl-\dd:\angi+\dd:\radius);
			
				\draw[-stealth] (p) to[out=\angl+180+45,in=\angi+90-30] (i);
				\draw[-stealth] (p) to[out=\angl+180-45,in=\angj+270+30] (j);
			\end{tikzpicture}
		
			\caption{}
			\label{fig:lemmac}
		\end{subfigure}
		\caption{Cases in which the directed cycle~$C$ appears in the graph. Solid lines represent edges, dashed ones paths. (a)~If the common parent~$p$ of~$c_i,c_j$ is an element in~$C$, then~$C$ is not induced.
			(b)~If there exists a~directed path from a node in~$C$ to~$p$, then the graph contains a~non-induced cycle.
			(c)~The only possibility is that there exists a set~$V'$ of nodes without paths from~$C$ to~$V'$.
		}
		\label{fig:lemma}
	\end{figure}
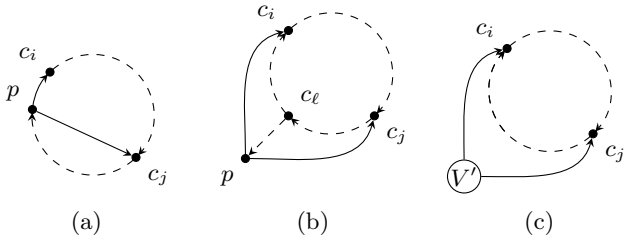
	More so,~$G$ contains no directed path~\mbox{$\pi=(c_\ell,\dots,p)$} with~$c_\ell\in C$.
	If there were such a~path and~$i\leq\ell<j$, then the directed cycle~$(p,c_j,\dots,c_i,\dots,c_\ell,\dots)$ would not be induced~(see~Figure~\ref{fig:lemmab}).
	A similar argument holds for~$\ell <i$ and~$j\leq\ell$.
	Since this holds for any pair of siblings on~$C$ and any common parent, we conclude that~$G$ does not contain any path from any node in~$C$ to any common parent~$p$~(see~Figure~\ref{fig:lemmac}).
	Now, the induced graph~$G[V']$ is a chordless siblings-on-cycles graph.
	The reason for this is that the nodes~$V'$ have the same incoming edges in~$G[V']$ and~$G$.
	Finally, note that~$V'$ is non-empty (it contains at least the node~$p$), and~$V'\subsetneq V$ because~$V'\cap C=\emptyset$.
\end{proof}
\begin{proof}[Proof of Theorem~\ref{thm:admissiblerestricted}]
	By Theorem~\ref{thm:cprocess}, the causal model is consistent if and only if for any choice of experiments~$\{\mu_k\}_{k\in\mathbb Z_n}$ there exists a {\em unique\/} fixed point of~$\omega\circ\mu$.
	Towards a contradiction, assume that~$\omega$ is {\em not\/} a classical-deterministic process.
	Therefore, and by Theorem~\ref{thm:equivalence}, let the experiments~$\{\mu_{k}\}_{k\in\mathbb Z_n}$ be such that~$\omega\circ\mu$ has at least two distinct fixed points,~$r$ and~$r'$.
	We start by observing two implications:
	\begin{align}
		&r_i \neq r'_i \implies \exists p \in \Pa(i): r_p\neq r'_p\,,\label{eq:fpparent}\\
		\begin{split}
			&r_i \neq r'_i \wedge r_j\neq r'_j \wedge  i\neq j\\
			&\qquad\implies\\
			&\forall p\in\CoPa(\{i,j\}): r_p \neq r'_p\,.\label{eq:fpcommonparent}
		\end{split}
	\end{align}
	For the first implication~\eqref{eq:fpparent}, suppose~$r$ and~$r'$ differ at position~$i$, and assume without loss of generality that~\mbox{$r_i=1$}.
	By the choice of model parameters, we have
	\begin{align}
		r_i &= 1=\prod_{p\in\Pa(i)} [ i = \mu_p(r_p)]\,,\\
		r'_i &= 0=\prod_{p\in\Pa(i)} [ i = \mu_p(r'_p)]\,.
	\end{align}
	This means that for {\em all\/} parents~$p\in\Pa(i)$ the identity~\mbox{$i=\mu_p(r_p)$} holds.
	But since~$r'_i=0$, there must exists a parent~$p\in\Pa(i)$ such that~$i\neq \mu_p(r'_p)$: The fixed points also differ on node~$p$.
	For the second implication~\eqref{eq:fpcommonparent}, additionally suppose that~$r_j\neq r'_j$ for a~node~$j$ different from~$i$.
	If~$i$ and~$j$ do not have common parents, then the implication trivially holds.
	For the alternative, let~$p\in\CoPa(\{i,j\})$ be an arbitrary common parent.
	Since~$r_i=1$ and~$p\in\Pa(i)$, we have~$i=\mu_p(r_p)$, which implies~$j\neq\mu_p(r_p)$, and therefore~$r_j=0$.
	This, in turn, implies~$r'_j=1$ and~$j=\mu_p(r'_p)$: The fixed points also differ for~$p$.

	Since the graph~$G$ is finite, we conclude from implication~\eqref{eq:fpparent} that~$G$ contains at least one directed cycle on which the fixed points differ.
	Now, let~$V_{\text{min}}\subseteq V$ be a~non-empty set of nodes with {\em minimal cardinality\/}~$|V_{\text{min}}|$ such that the induced graph~$G[V_{\text{min}}]$
	is a chordless \mbox{siblings-on-cycles} graph, and which contains a directed cycle~$C$ where the fixed points~$r$ and~$r'$ differ.
	Moreover, let~$i,j\in C$ be siblings on~$C$, and~$p\in\CoPa(\{i,j\})$ a~common parent.
	By Lemma~\ref{lemma:subgraphs}, there exists a {\em smaller\/} non-empty set of nodes~$V'=\CoPa(C)\cup\Anc(\CoPa(C))\subsetneq V_{\text{min}}$ such that the induced graph~$G[V']$ is a chordless siblings-on-cycles graph.
	Note that the common parent~$p$ and all its ancestors~$\Anc(p)$ are in the set~$V'$.
	Implication~\eqref{eq:fpcommonparent} now states that~$r_p\neq r'_p$, and implication~\eqref{eq:fpparent} that~$V'$ contains a directed cycle~$C'$ where the fixed points~$r$ and~$r'$ differ:
	The extremality of~$|V_{\text{min}}|$ is violated.
\end{proof}

\section{Causal correlations}
In this second part of our work, we derive consequences for the observable correlations from graph-theoretic properties of the causal structure.
Due to Theorem~\ref{thm:inadmissible}, we restrict ourselves to siblings-on-cycles graphs.

The following graph-theoretic lemma is helpful for our first result.
\begin{lemma}
	\label{lemma:source}
	If $G=(V,E)$ is a chordless siblings-on-cycles graph then $G$ contains a~source node, {\it i.e.,}~\mbox{$\exists k\in V:\deg_\text{in}(k)=0$}.
\end{lemma}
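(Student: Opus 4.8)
The plan is to argue by strong induction on $|V|$: whenever $G$ contains a directed cycle we peel off a strictly smaller chordless siblings-on-cycles subgraph using Lemma~\ref{lemma:subgraphs}, and otherwise we invoke the elementary fact that a finite acyclic digraph has a source.

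First, handle the acyclic case. If $G$ has no directed cycle, then starting at any node and repeatedly following an incoming edge backwards can never revisit a node (that would exhibit a directed cycle), so the walk must terminate---necessarily at a node of in-degree zero. This also settles $|V|=1$: a lone node carrying a self-loop would be a directed cycle $C=(v)$ with $\CoPa(C)=\emptyset$, contradicting the siblings-on-cycles hypothesis, so that node is a source.

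Next, the inductive step. Suppose $G$ contains a directed cycle $C$. Since $G$ is a siblings-on-cycles graph, $C$ contains siblings, so $\CoPa(C)\neq\emptyset$. By Lemma~\ref{lemma:subgraphs}, the induced subgraph $G[V']$ on $V':=\CoPa(C)\cup\Anc(\CoPa(C))$ is again a chordless siblings-on-cycles graph with $\emptyset\neq V'\subsetneq V$, hence $1\le|V'|<|V|$ and the induction hypothesis applies to $G[V']$. The crucial point is that $V'$ is closed under taking ancestors in $G$: an ancestor of a node in $\CoPa(C)$ lies in $\Anc(\CoPa(C))\subseteq V'$, and an ancestor of a node in $\Anc(\CoPa(C))$ is again an ancestor of $\CoPa(C)$, hence in $V'$. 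In particular, every parent in $G$ of a node $k\in V'$ already lies in $V'$, so the in-degree of $k$ is the same whether computed in $G$ or in $G[V']$. Therefore a source of $G[V']$, which exists by the induction hypothesis, is a source of $G$.

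Essentially all the content is packaged in Lemma~\ref{lemma:subgraphs}; the only thing demanding (mild) care is to check that the set $V'$ it produces is ancestrally closed, so that ``being a source'' is transported from $G[V']$ back up to $G$ rather than destroyed when the deleted edges are reinstated. With that in hand the induction closes, since $|V'|<|V|$.
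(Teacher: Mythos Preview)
Your proof is correct and essentially mirrors the paper's: both invoke Lemma~\ref{lemma:subgraphs} to pass to a strictly smaller, ancestrally closed chordless siblings-on-cycles induced subgraph, with the paper phrasing this as a minimal-counterexample contradiction rather than strong induction. Your explicit check that $V'$ is ancestrally closed (so a source of $G[V']$ remains a source of $G$) is exactly the fact the paper uses implicitly, via the observation in the proof of Lemma~\ref{lemma:subgraphs} that nodes of $V'$ retain all their incoming edges.
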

\begin{proof}
	We prove this lemma by contradiction and an extremality argument.
	Suppose~$G$ is a chordless siblings-on-cycles graph, but {\em without\/} source nodes.
	Now, let~\mbox{$V_{\text{min}}\subseteq V$} be a non-empty set of nodes with {\em minimal cardinality\/}~$|V_{\text{min}}|$ such that the induced graph~$G[V_{\text{min}}]$ has the same properties as~$G$, {\it i.e.,\/}~$G[V_{\text{min}}]$ is a chordless~siblings-on-cycles graph with no source nodes.
	Since~$G[V_{\text{min}}]$ is non-empty and every node has an incoming edge, this induced graph contains a directed cycle~$C=(c_0,c_1,\dots,c_{m-1})$.
	By Lemma~\ref{lemma:subgraphs}, however, there exists a {\em smaller\/} set of nodes~$V'$ such that the induced graph~$G[V']$ has the same properties as~$G$.
\end{proof}
 
We now relate causal structures with {\em causal\/} correlations:
\begin{theorem}[Causal correlations]
	\label{thm:causal}
	Let~$\omega$ be a classical-deterministic process with causal structure~$G=(V,E)$.
	If~$G$ is a chordless siblings-on-cycles graph, 
	then~$\omega$ always produces causal correlations, {\it i.e.,\/}~for all experiments~$\mu$ the correlations~$p(a\mid x)=\omega\star\mu^{a\mid x}$ are causal.
\end{theorem}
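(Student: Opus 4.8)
The plan is induction on the number of parties~$n$, carried out in a slightly strengthened form: \emph{if $G$ is a chordless siblings-on-cycles graph and $\omega$ is an $n$-party classical-deterministic process whose causal structure is a spanning subgraph of $G$, then $\omega\star\mu^{a\mid x}$ is causal for every choice of experiments.} The strengthening is what makes the recursion close, since the reduced process produced below will in general have a causal structure that is a proper edge-subgraph of $G[V\setminus\{k\}]$, and such a subgraph need not itself be siblings-on-cycles; carrying the ambient chordless siblings-on-cycles graph along repairs this.

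For $n=1$ there is nothing to do: a chordless siblings-on-cycles graph on one node has no edge (a self-loop would be a sibling-free cycle), so $\omega$ is constant, $a$ depends on $x$ only, and one-party correlations are causal by definition. For the step, Lemma~\ref{lemma:source} supplies $G$ with a source $k$; deleting edges cannot remove a source, so $k$ is also a source of $\omega$'s causal structure, and component-wise non-signaling (Lemma~\ref{lemma:reduced}) then forces $\omega_k$ to be a constant $c_k$. Fixing the experiments, party $k$ deterministically outputs the value $v:=\mu_k^0(x_k,c_k)$ and records $b:=\mu_k^1(x_k,c_k)$, so $p(a_k\mid x_k)=[a_k=b]$ depends on $x_k$ only.

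I would then apply Lemma~\ref{lemma:reduced} to eliminate $k$: the reduced process $\omega_{\setminus k}^{\mu_k}$, which here is simply $o_{\setminus k}\mapsto(\omega_\ell(o_{\setminus k},v))_{\ell\neq k}$, is an $(n-1)$-party classical-deterministic process, and its causal structure is a spanning subgraph of $G[V\setminus\{k\}]$ — fixing the input on $\mathcal O_k$ can only suppress dependencies. Since $G[V\setminus\{k\}]$ is an induced subgraph of the chordless siblings-on-cycles graph $G$, it is again chordless siblings-on-cycles, so the induction hypothesis gives that $p_{a_k}^{x_k}(a_{\setminus k}\mid x_{\setminus k}):=\omega_{\setminus k}^{\mu_k}\star\mu_{\setminus k}^{a_{\setminus k}\mid x_{\setminus k}}$ is an $(n-1)$-party causal correlation (for the inconsistent value $a_k\neq b$ one may define it by the same formula).

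Finally I would assemble the causal decomposition of Definition~\ref{def:causal} with $q_k=1$. Tracking the unique fixed point $r$ of $\omega\circ\mu$ from Theorem~\ref{thm:cprocess}: its $k$-th component is forced to $r_k=c_k$, and then $r_{\setminus k}$ is exactly the unique fixed point of $\omega_{\setminus k}^{\mu_k}\circ\mu_{\setminus k}$, with all outcomes matching; via Eq.~\eqref{eq:correlations} this yields $p(a\mid x)=[a_k=b]\cdot\big(\omega_{\setminus k}^{\mu_k}\star\mu_{\setminus k}^{a_{\setminus k}\mid x_{\setminus k}}\big)=p(a_k\mid x_k)\,p_{a_k}^{x_k}(a_{\setminus k}\mid x_{\setminus k})$, which is of the required form. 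The one genuinely delicate point — and essentially the only place real thought is needed — is the one flagged at the outset: ``chordless siblings-on-cycles'' is not closed under edge deletion, so the induction must range over processes whose causal structure merely sits inside such a graph, with Lemma~\ref{lemma:source} applied to the ambient graph supplying a fresh source at every stage; the constancy of $\omega_k$, the shape of the reduced fixed point, and the factorization of the link product are all routine.
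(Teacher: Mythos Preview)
Your argument has a genuine gap at the step where you assert that ``since $G[V\setminus\{k\}]$ is an induced subgraph of the chordless siblings-on-cycles graph $G$, it is again chordless siblings-on-cycles.'' This is false. Take the three-node graph with edges $A\to B$, $A\to C$, $B\to C$, $C\to B$ (this is graph~15 in Figure~\ref{fig:admissiblethreepartystructures}). The only directed cycle is $(B,C)$, it is induced, and $B,C$ are siblings with common parent $A$; so $G$ is chordless siblings-on-cycles, with unique source $A$. But $G[\{B,C\}]$ is the bare two-cycle $B\leftrightarrows C$, on which $B$ and $C$ have no common parent at all---so it is \emph{not} siblings-on-cycles. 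Your strengthened induction hypothesis therefore cannot be applied with ambient graph $G[V\setminus\{k\}]$, and the recursion does not close.

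The paper handles this point differently and, in doing so, makes your strengthening unnecessary. Instead of reasoning about $G[V\setminus\{k\}]$, it looks at the \emph{actual} causal structure $G'$ of the reduced process $\omega'_{\setminus k}$. Since $\omega'_{\setminus k}$ is again a classical-deterministic process (Lemma~\ref{lemma:reduced}), Theorem~\ref{thm:inadmissible} forces $G'$ to be siblings-on-cycles; and since reducing can only remove dependencies, $G'$ is a subgraph of $G$ and hence chordless (any directed cycle of $G'$ is a directed cycle of $G$, induced there, so induced in $G'$ as well). Thus $G'$ is itself chordless siblings-on-cycles, and one recurses directly on it. Your intuition that ``chordless siblings-on-cycles is not closed under edge deletion'' is correct in the abstract, but for the particular edge-subgraph that arises as the causal structure of a valid process, the siblings-on-cycles half is supplied for free by Theorem~\ref{thm:inadmissible}; only the chordless half needs the ``no new edges'' observation. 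With this in place the plain (unstrengthened) induction already goes through, and the rest of your proof---the base case, the extraction of a source via Lemma~\ref{lemma:source}, the fixed-point factorization---is fine.
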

\begin{proof}
	By Lemma~\ref{lemma:source}, the graph~$G$ contains a source node~$k\in V$, {\it i.e.,\/}~$\Pa(k)=\emptyset$.
	Therefore, the~$k$-th component of~$\omega$ is a {\em constant\/} (function)~$\omega_k\in\mathcal I_k$ \mbox{($\omega_k:\emptyset\rightarrow\mathcal I_k$)},
	and we can express the correlations~$p(a\mid x)$ given by~Eq.~\eqref{eq:correlations} as
	\begin{align}
		\begin{split}
			&p(a\mid x)
			=
			\sum_{i_k,o_k}
			\left[ \omega_k = i_k \right]
			\left[ \left(a_k, o_k\right) = \mu_k\left(x_k, i_k\right) \right]\times\\
			&\sum_{i_{\setminus k},o_{\setminus k}}
			\left[ \omega_{\setminus k}(o) = i_{\setminus k} \right]
			\left[ \left(a_{\setminus k}, o_{\setminus k}\right) = \mu_{\setminus k}\left(x_{\setminus k}, i_{\setminus k}\right) \right]
		\end{split}\\
		\begin{split}
			&=
			\left[ a_k = \mu^0_k\left(x_k, \omega_k\right) \right]
			\times
			\bigg(
			\sum_{i_{\setminus k},o_{\setminus k}}
			\left[ \omega'_{\setminus k}\left(o_{\setminus k}\right) = i_{\setminus k} \right]\\
			&\quad
			\left[ \left(a_{\setminus k}, o_{\setminus k}\right) = \mu_{\setminus k}\left(x_{\setminus k}, i_{\setminus k}\right) \right]
			\bigg)
			\,,
			\label{eq:kalone}
		\end{split}
	\end{align}
	where~$\omega'_\ell(o_{\Pa(\ell)})$ is~$\omega_\ell(\mu_k^1(x_k,\omega_k),o_{\Pa(\ell)\setminus\{k\}})$ if~$k\in\Pa(\ell)$,
	and~$\omega_\ell(o_{\Pa(\ell)})$ otherwise.
	This~$\omega'_{\setminus k}$ is a {\em reduced process\/}~(see Lemma~\ref{lemma:reduced}).
	The former term of Eq.~\eqref{eq:kalone} describes the observed statistics~$p(a_k\mid x_k)$ for party~$k$, the latter has the form~$\omega'_{\setminus k}\star \mu^{a_{\setminus k}\mid x_{\setminus k}}_{\setminus k}$.
	Therefore, the correlations~$p(a\mid x)$ decompose as
	\begin{align}
		p(a\mid x) = p(a_k\mid x_k) p^{x_k}(a_{\setminus k}\mid x_{\setminus k})
		\,.
	\end{align}
	The causal structure~$G'$ of the reduced process~$\omega'_{\setminus k}$ remains a chordless siblings-on-cycles graph (no edges are introduced and the reduced process is a classical-deterministic process).
	By repeating this argument, we end at the decomposition of~$p(a\mid x)$ as in Definition~\ref{def:causal}.
\end{proof}

Theorem~\ref{thm:causal} and the proof thereof leads to the following conjecture for the quantum case:
\begin{conjecture}[Quantum causal correlations]
	\label{conj:causal}
	Let~$W$ be a quantum process with causal structure~$G=(V,E)$.
	If~$G$ is a chordless siblings-on-cycles graph,
	then~$W$ always produces causal correlations.
\end{conjecture}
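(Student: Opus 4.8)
The plan is to follow the proof of Theorem~\ref{thm:causal} step for step, translating the fixed-point bookkeeping into the process-matrix calculus. By Lemma~\ref{lemma:source}, the chordless siblings-on-cycles graph $G$ has a source $k$, i.e.\ $\Pa(k)=\emptyset$. Since $G$ is the causal structure of $W$, no party signals into $\mathcal I_k$, so the reduced state on $\mathcal I_k$ is a fixed density operator $\rho_k$, the same for every choice of local experiments; this is the quantum counterpart of ``$\omega_k$ is a constant''.

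Fix the setting $x_k$ of party $k$ and condition on its outcome $a_k$. The instrument element $\mu_k^{a_k\mid x_k}$ sends the fixed input $\rho_k$ to a fixed sub-normalized state on $\mathcal O_k$; plugging this (equivalently, the Choi operator $\rho^{\mu_k^{a_k\mid x_k}}$) into $W$ via the link product and renormalizing yields an operator $W^{a_k\mid x_k}_{\setminus k}$ on $\mathcal I_{\setminus k}\otimes\mathcal O_{\setminus k}$ --- the quantum analogue of the reduced process of Lemma~\ref{lemma:reduced}. Because $k$ is a source, the outcome probability $p(a_k\mid x_k)$ is insensitive to the experiments of the other parties, and this is exactly what is needed to verify that $W^{a_k\mid x_k}_{\setminus k}$ is a bona fide $(n-1)$-party quantum process (positivity is automatic, and the normalization~\eqref{eq:qprocess} for the remaining parties reduces to the normalization of $W$). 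The generalized Born rule then splits as $p(a\mid x)=p(a_k\mid x_k)\,p^{a_k\mid x_k}_{\setminus k}(a_{\setminus k}\mid x_{\setminus k})$, which is the $q_k=1$ instance of the decomposition in Definition~\ref{def:causal} --- the quantum analogue of Eq.~\eqref{eq:kalone} --- provided the reduced correlations $p^{a_k\mid x_k}_{\setminus k}$ are $(n-1)$-party causal.

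Closing the induction is where I expect the real work. The causal structure $G'$ of $W^{a_k\mid x_k}_{\setminus k}$ has no edge that is not already in $G[V\setminus k]$ (deleting $k$ and feeding in a fixed state can only create more non-signaling), so every directed cycle of $G'$ is a directed cycle of $G$ and hence, $G$ being chordless, chordless in $G'$; thus $G'$ is again chordless. In the classical proof this suffices, because one additionally uses that the causal structure of every classical-deterministic process is a siblings-on-cycles graph (lift to a faithful causal model and apply Theorem~\ref{thm:inadmissible}), so a surviving sibling-free cycle in $G'$ would contradict validity of the reduced process. No such statement holds for arbitrary quantum processes: a suitable mixture of the two causally-ordered two-party processes signals both ways, so its causal structure is the chordless two-cycle, which is \emph{not} a siblings-on-cycles graph. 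Hence $G'$ need not be siblings-on-cycles and the induction hypothesis, stated for chordless siblings-on-cycles graphs, does not directly apply. The fix I would attempt is to prove the stronger statement that \emph{every} quantum process with a chordless causal structure yields causal correlations: removing a source preserves chordlessness, the cases with no source (such as the two-cycle) are covered by the fact that every one- or two-party process is causally separable, and what remains is to show that a process whose causal structure contains a sibling-free chordless cycle --- a structure that Theorem~\ref{thm:inadmissible} already excludes for faithful causal models --- either cannot exist or can only support causal correlations. That last point is, I expect, the main obstacle.
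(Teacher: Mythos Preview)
The paper does \emph{not} prove this statement: it is stated as Conjecture~\ref{conj:causal}, with the remark that ``Theorem~\ref{thm:causal} and the proof thereof leads to the following conjecture for the quantum case,'' and in the conclusion the authors write that ``the difficulty in proving this conjecture stems from the fact that the causal structure of a {\em reduced\/} quantum process might not necessarily be representable by a directed graph.'' There is therefore no proof in the paper to compare against.

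Your proposal is not a proof either, and to your credit you say so. The first two paragraphs correctly transcribe the classical argument: Lemma~\ref{lemma:source} supplies a source~$k$, the input on~$\mathcal I_k$ is a fixed state, and conditioning on~$(a_k,x_k)$ yields a reduced $(n-1)$-party quantum process and the factorization $p(a\mid x)=p(a_k\mid x_k)\,p^{a_k\mid x_k}_{\setminus k}(a_{\setminus k}\mid x_{\setminus k})$. The gap you identify in the third paragraph is precisely the obstruction the paper flags: after reduction, the remaining process need not admit a (Markovian) causal structure that is again a chordless siblings-on-cycles graph --- indeed it need not admit a directed-graph causal structure at all --- so the induction hypothesis does not apply. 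Your two-party counterexample (a mixture of the two causal orders has a two-cycle causal structure) illustrates the same phenomenon.

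Your proposed fix --- strengthen the hypothesis to ``chordless'' only and handle source-free chordless graphs separately --- does not close the gap. Even granting the strengthened induction hypothesis, the reduced process $W^{a_k\mid x_k}_{\setminus k}$ is in general non-Markovian, so it has no well-defined causal structure in the sense of Definition of causal models used here, and ``chordless'' is not even a property one can ask of it. This is exactly what the paper means by ``not necessarily representable by a directed graph.'' Until one controls the structure of the reduced quantum process (or finds a different argument altogether), the statement remains open.
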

It is well-known that the celebrated quantum switch mentioned in the introduction has a chordless siblings-on-cycles causal structure (see Figure~\ref{fig:introb}), and also that it does not violate causal inequalities~\cite{araujo2015,oreshkov2016}.

\section{Non-causal correlations}
Some causal structures imply that causal inequalities {\em can\/} be violated.
Before we state and prove these results, let us introduce the following causal game.

\begin{game*}[$G^n_{\mathcal S}$]
	\label{def:game}
	Consider a scenario with~$n$ parties~$\mathbb Z_n$ and let~$\mathcal S$ be a non-empty subset of~$\mathbb Z_n$.
	For each party~$k\in\mathbb Z_n$, the set of settings is~\mbox{$\mathcal X_k:=(\mathbb Z_n \cup \{\bot\})\times (\mathbb Z_2 \cup \{\bot\})$},
	and the set of observations is~$\mathcal A_k:=\mathbb Z_2$.
	A referee uniformly at random picks a party~$s\in\mathcal S$ and a bit~$b\in\mathbb Z_2$.
	Then, the referee distributes~$s$ to all parties in~$\mathcal S$,~$b$ to all parties in~$\mathcal S\setminus \{s\}$, and nothing ($\bot)$ to the remaining parties,
	{\it i.e.,\/}~the settings are~$x_s = (s,\bot)$,~$x_k = (s,b)$ for~$k\in\mathcal S\setminus\{s\}$, and~$x_\ell = (\bot,\bot)$ for~$\ell\in\mathbb Z_n\setminus\mathcal S$.
	The parties win the game~$G^n_{\mathcal S}$ whenever party~$s$ correctly guesses~$b$, {\it i.e.,\/} whenever~$a_s = b$.
\end{game*}
This causal game\footnote{Note that~$G^n_{\mathcal S}$ is a modification of the ``selective signaling game''~\cite{baumeler3parties}.} has a non-trivial upper bound on the winning probability for causal correlations.
\begin{theorem}[Causal inequality]
	\label{thm:causalinequality}
	If~$p(a\mid x)$ are~$n$-party causal correlations, then the winning probability of the game~$G^n_{\mathcal S}$ is bounded by
	\begin{align}
		\Pr[a_s = b] \leq 1 - \frac{1}{2|\mathcal S|}
		\,.
	\end{align}
\end{theorem}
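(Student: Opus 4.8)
The strategy is to use the causal decomposition (Definition~\ref{def:causal}) and induct on the number of parties, tracking how the ``first'' party in any causal order interacts with the game structure. The key observation is that in the game $G^n_{\mathcal S}$, the answer $a_s$ must depend on $b$, but $b$ is only distributed to parties in $\mathcal S\setminus\{s\}$; so if party $s$ happens to be the first party in the causal order, its observation $a_s$ is independent of all other parties' data, hence independent of $b$, and it can do no better than a random guess (probability $1/2$). The bound should emerge by weighting this ``bad case'' against the remaining cases.

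\textbf{Main steps.} First I would set up the induction on $n$. For the base case, I would check small $|\mathcal S|$ or $n=|\mathcal S|$ directly: the referee's choice of $s$ is uniform over $\mathcal S$, and the event ``$s$ is (a) first party in the causal decomposition'' occurs with probability at least $1/|\mathcal S|$ over the referee's randomness, conditioned on which term $q_k$ of the decomposition is active. Second, for the inductive step, I would fix a causal decomposition $p(a\mid x) = \sum_{k} q_k\, p(a_k\mid x_k)\, p_{a_k}^{x_k}(a_{\setminus k}\mid x_{\setminus k})$ and split the analysis by whether the first party $k$ lies in $\mathcal S$ or not. If $k = s$ (which, given $k\in\mathcal S$, happens with probability $1/|\mathcal S|$ over the referee's draw of $s$), then $a_k = a_s$ is produced from $x_k = (s,\bot)$ alone, carrying no information about $b$, so the conditional winning probability is exactly $1/2$. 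If $k\in\mathcal S$ but $k\ne s$, or if $k\notin\mathcal S$, then I would argue the remaining $(n-1)$-party correlations $p_{a_k}^{x_k}$ still realize a game of the same type on a party set $\mathcal S' = \mathcal S\setminus\{k\}$ (or $\mathcal S$ itself, if $k\notin\mathcal S$) with $|\mathcal S'|\ge |\mathcal S|-1$, and invoke the induction hypothesis to bound that contribution by $1 - \frac{1}{2|\mathcal S'|} \le 1 - \frac{1}{2|\mathcal S|}$ --- though here one must be slightly careful about how the known setting $s$ and bit $b$ restrict into an instance of $G^{n-1}_{\mathcal S'}$. Third, I would assemble these pieces: averaging over the referee's uniform choice of $s\in\mathcal S$, the term where the first party equals $s$ contributes weight $\tfrac{1}{|\mathcal S|}\cdot\tfrac12$ and the complementary weight $1-\tfrac{1}{|\mathcal S|}$ contributes at most $1$ (or, more tightly, the inductive bound), yielding $\Pr[a_s=b] \le \tfrac{1}{|\mathcal S|}\cdot\tfrac12 + \bigl(1-\tfrac{1}{|\mathcal S|}\bigr)\cdot 1 = 1 - \tfrac{1}{2|\mathcal S|}$.

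\textbf{Anticipated obstacle.} The delicate point is the bookkeeping in the inductive step: when the first party $k$ is in $\mathcal S$ but is not the chosen $s$, party $k$ learns both $s$ and $b$, and one must verify that the residual $(n-1)$-party correlations, conditioned on $k$'s outcome, genuinely form a valid instance of the game on $\mathcal S\setminus\{k\}$ with the referee's distribution over $(s,b)$ correctly induced --- in particular that $s$ is still uniform over the smaller set and that no party outside $\mathcal S\setminus\{k\}$ has gained relevant information. A related subtlety is ensuring the ``$1/|\mathcal S|$ is the first party'' event is correctly coupled to the referee's randomness rather than only to the decomposition weights $q_k$; the cleanest way is to note that for \emph{any} fixed causal decomposition, whichever party is first, the referee picks that particular party as $s$ with probability $1/|\mathcal S|$ if it lies in $\mathcal S$, and with probability $0$ otherwise (in which case we fall back on the inductive bound for $\mathcal S$ unchanged). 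Handling the $k\notin\mathcal S$ branch uniformly with the $k\in\mathcal S,\ k\ne s$ branch is what makes the recursion close cleanly.
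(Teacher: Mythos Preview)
Your proposal is correct and follows essentially the same approach as the paper: observe that when the first party $k$ in the causal decomposition coincides with the referee's choice $s$ (which happens with probability $1/|\mathcal S|$), its output cannot depend on $b$ and so wins with probability $1/2$, then bound the complementary case by $1$ to obtain $\tfrac{1}{|\mathcal S|}\cdot\tfrac12 + \bigl(1-\tfrac{1}{|\mathcal S|}\bigr)\cdot 1 = 1 - \tfrac{1}{2|\mathcal S|}$. The paper's proof is briefer---it skips the inductive scaffolding and the explicit $k\notin\mathcal S$ branch you worry about---but the core idea and final computation are identical.
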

\begin{proof}
	Suppose the correlations~$p(a\mid x)$ decompose as
	\begin{align}
		p(a\mid x) = p(a_k\mid x_k) p^{x_k}_{a_k}(a_{\setminus k}\mid x_{\setminus k})
		\,,
	\end{align}
	where~$p^{x_k}_{a_k}(a_{\setminus k}\mid x_{\setminus k})$ are~$n-1$-party causal correlations and~$k\in\mathcal S$.
	In the event that~$s=k$, which happens with probability~$1/|\mathcal S|$, the game is won with half probability~(the bit~$b$ is uniformly distributed and party~$s$ has no access to~$b$).
	In the event that~$s\neq k$, the game is won with probability at most one.
	Therefore, the winning probability is upper bounded by
	\begin{align}
		\frac{1}{|\mathcal S|}\left(\frac{1}{2}+|\mathcal S|-1\right) = 1 - \frac{1}{2|\mathcal S|}
		\,.
	\end{align}
	The same bound holds for any other decomposition of~\mbox{$p(a\mid x)$}, and therefore also for convex combinations thereof.
\end{proof}

This brings us to a graph-theoretic criterion which implies a violation of the above causal inequality.
\begin{theorem}[Non-causal correlations]
	\label{thm:noncausal}
	Let~$\omega$ be a classical-deterministic process with causal structure~\mbox{$G=(V,E)$}.
	If~$G$ contains a directed cycle~$C$ where all common parents are in~$C$, {\it i.e.,\/}~$\CoPa(C)\subseteq C$,
	then~$\omega$ produces non-causal correlations, {\it i.e.,\/}~there exists an experiment~$\mu$ such that the correlations~$p(a\mid x)=\omega\star\mu^{a\mid x}$ violate the causal inequality for the game~$G^{|V|}_{C}$, and therefore, the correlations~$p(a\mid x)$ are non-causal.
\end{theorem}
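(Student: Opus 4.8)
The plan is to exhibit, for the given process~$\omega$, a family of deterministic experiments~$\mu$ that wins the game~$G^{|V|}_{C}$ with certainty, {\it i.e.,\/}~that forces~$a_s=b$ for every referee choice~$(s,b)$. Since the winning probability~$1$ exceeds the causal bound~$1-\frac{1}{2|C|}$ of Theorem~\ref{thm:causalinequality}, the induced correlations~$p(a\mid x)=\omega\star\mu^{a\mid x}$ cannot be causal. Crucially, each party's experiment may depend on its setting, hence on~$s$ (and, for~$k\in C\setminus\{s\}$, on~$b$), so I am free to tailor a separate relay configuration for each value of~$s$. For fixed~$s$, I relabel the cycle as~$s=c_0$ with directed edges~$c_0\to c_1\to\dots\to c_{m-1}\to c_0$, and aim to transport the bit~$b$---which every party~$c_1,\dots,c_{m-1}$ holds---forward along the cycle to~$c_0=s$.

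The core construction uses~$s$ as a terminator and the remaining cycle parties as relays. Party~$s$ emits a fixed symbol on its future boundary, breaking the feedback loop at~$s$, and decodes its past boundary~$\mathcal I_s$ into its guess~$a_s$. Every cycle edge~$c_j\to c_{j+1}$ is present in~$G$, hence the process signals from~$\mathcal O_{c_j}$ to~$\mathcal I_{c_{j+1}}$ in some context on the remaining parents of~$c_{j+1}$; the task is to place each receiving node's other parents at such a signaling context so that~$b$ propagates~$c_1\to\dots\to c_{m-1}\to s$. The parties outside~$C$ receive the setting~$\bot$, but I still choose their (constant) experiments freely; by the hypothesis~$\CoPa(C)\subseteq C$, every external party is a parent of at most one cycle node (otherwise it would be an external common parent), so these constants only furnish private context to individual cycle edges and cannot jointly disable the cycle.

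The decisive ingredient is the on-cycle common parent. Because~$\omega$ is a valid process, Theorem~\ref{thm:inadmissible} forces~$\CoPa(C)\neq\emptyset$, and by hypothesis such a common parent~$p$ lies on~$C$. This is exactly the feature distinguishing the present regime from the chordless case of Theorem~\ref{thm:causal}: there the common parent is external, and fixing it to a constant (reducing it via Lemma~\ref{lemma:reduced}) destroys the cyclic signaling---which is precisely why those structures are purely causal. Here, by contrast,~$p$ remains an \emph{active} relay whose output, as determined at the fixed point, supplies the context that keeps the cycle edge into~$s$ signaling. Propagating a bit all the way around an undisturbed directed cycle would constitute an information antinomy and is ruled out by uniqueness of the fixed point (Theorems~\ref{thm:cprocess} and~\ref{thm:equivalence}); the controlled break at~$s$ converts the cycle into an acyclic relay line terminating at~$s$, which admits a unique consistent assignment carrying~$b$ to~$s$.

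The main obstacle is to certify that the intended configuration is indeed the unique fixed point of~$\omega\circ\mu$ for an \emph{arbitrary} valid~$\omega$, given only the edgewise existence of signaling. Concretely, the per-edge signaling contexts must be composed consistently around the cycle---each relay's non-predecessor parents, including the on-cycle common parent, must sit at a signaling context simultaneously at the fixed point---whereas uniqueness itself is automatic from process validity. I expect to discharge this by induction on~$|V|$: using component-wise non-signaling (Lemma~\ref{lemma:reduced}) I peel off parties that are not on-cycle common parents, arguing---via the chord/ancestor analysis underlying Lemma~\ref{lemma:subgraphs}---that the reduced process retains a directed cycle with~$\CoPa\subseteq C$ and with its signaling intact, until reaching a minimal non-causal core (of AF/BW type) whose winning experiments can be written down explicitly and then lifted back. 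The delicate point throughout is to choose the reductions so that they preserve both the cycle's signaling and the~$\CoPa(C)\subseteq C$ property, which is exactly what fails in the chordless setting.
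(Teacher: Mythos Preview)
Your proposal takes an unnecessarily hard route. You correctly note that every cycle party other than~$s$ already holds~$b$ as part of its setting---and then set up a relay carrying~$b$ all the way around~$C$ to~$s$. But the immediate cycle-predecessor~$s^-$ of~$s$ already knows~$b$, so a \emph{single hop} $s^-\to s$ suffices; this is exactly what the paper does. With only that one edge to activate, the ``composition of per-edge signaling contexts around the cycle'' that you flag as the main obstacle never arises.

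Concretely, the paper's argument runs as follows. By faithfulness there is a tuple~$o^*$ on~$\Pa(s)\setminus\{s^-\}$ for which~$\omega_s$ is signaling in~$o_{s^-}$. The on-cycle chord parents~$D':=(\Pa(s)\cap C)\setminus\{s^-\}$ know~$s$ from their setting and output their share of~$o^*$ as constants. Each external parent~$d\in\Pa(C)\setminus C$ has, by~$\CoPa(C)\subseteq C$, a \emph{unique} child in~$C$, so~$d$ can commit---without knowing~$s$---to the constant dictated by the edge into that one child; this is relevant precisely when the referee's~$s$ is that child and irrelevant otherwise. Party~$s^-$ outputs one of two values encoding~$b$; party~$s$ decodes~$i_s$; everyone else is arbitrary. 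Uniqueness of the fixed point is automatic since~$\omega$ is a process, and at that fixed point~$i_s$ separates the two values of~$b$, giving~$a_s=b$ with certainty.

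Your inductive plan, besides being unnecessary, has genuine gaps. Lemma~\ref{lemma:subgraphs} concerns \emph{chordless} siblings-on-cycles graphs and does not apply here---indeed~$\CoPa(C)\subseteq C$ together with Theorem~\ref{thm:inadmissible} forces an on-cycle common parent, hence a chord. Reducing via Lemma~\ref{lemma:reduced} need not preserve faithfulness of the remaining cycle edges, and no ``minimal AF/BW-type core'' is actually characterized. More fundamentally, in your relay scheme each~$c_j$'s output depends on its input, so you cannot simply \emph{place} on-cycle chord parents at a prescribed signaling context: their outputs are determined by the fixed point, not chosen by you. The one-hop argument sidesteps all of this by letting every party's output depend only on its setting.
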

\begin{proof}
	Let the directed cycle~$C$ be~$(c_0,c_1,\dots,c_{m-1})$,~$\sigma$ such that~$s=c_\sigma$, and ~$s^-:=c_{\sigma -1\pmod m}$, and let~$D$ be the set~$\Pa(C)\setminus C$.
	The condition~$\CoPa(C)\subseteq C$ implies that every party~$d\in D$ has a {\em single\/} edge to a~party in~$C$.
	Therefore, and since the causal model is faithful, there exists an experiment~$\mu_D$ for all parties in~$D$ such that signals can be sent along the cycle~$C$.
	Again due to faithfulness, there exists an experiment~$\mu_{D'}$ for the parties in~$D':=(\Pa(s)\cap C) \setminus \{s^-\}$ such that party~$s^-$ can send a signal to~$s$.
	Finally, there exists an experiment for party~$s^-$ such that party~$s$ receives an encoding of~$b$ on the input space~$\mathcal I_s$ (the set~$\mathcal I_s$ contains at least two elements, but is not necessarily equal to~$\mathbb Z_2$).
	By implementing these experiments, an experiment~$\mu_s$ that decodes~$b$ from the input~$i_s$ for party~$s$, and an arbitrary experiment~$\mu_{\text{rem}}$ for all remaining parties,
	we obtain~$p(a\mid x)=\omega \star \mu^{(a\mid x)}$ such that~$a_s$ {\em deterministically\/} takes value~$b$, {\it i.e.,}~$\Pr[a_s=b]=1$.
	This violates the causal inequality of Theorem~\ref{thm:causalinequality}.
\end{proof}
With these results at hand, we can complete Figure~\ref{fig:admissiblethreepartystructures}:
Only graph number 16 leads to non-causal correlations.
This causal structure is actually the causal structure of the AF/BW process~\cite{af,baumeler2016space}---the first classical-deterministic process known to yield non-causal correlations.
An example of a classical-deterministic process~$\omega$ to deterministically win the game~$\mathcal G_{\mathcal S}^n$ is given in Ref.~\cite{baumeler2021}.
The processes described in that article have as causal structure the fully connected graph.
Such a graph clearly has a~Hamiltonian cycle~$\mathcal C_H$, and therefore, all common parents~$\CoPa(\mathcal C_H)$ are inside that cycle: Theorem~\ref{thm:noncausal} is applicable.

\section{Conclusion}
In the first part of this work, we characterized a set of causal structures for which any faithful causal model is {\em inconsistent.}
We, consequently, provided, given any directed graph $G$, a construction of classical-deterministic model parameters.
We conjecture---and prove a restricted form thereof---that any causal model with these model parameters and any causal structure not in the inconsistent set, is {\em consistent.}
This conjecture complements Theorem~\ref{thm:inadmissible}.
It would imply that there exists a causal model with causal structure~$G$ if and only if~$G$ is a siblings-on-cycles graph.
A direct consequence of this conjecture is that {\em quantum theory does not allow for more general causal connections when compared to classical theories.}

In the second part, we used this characterization to show that two sets of causal structures lead to either {\em causal\/} or {\em non-causal\/} correlations, in the classical-deterministic case.
Note that a {\em decisive\/} graph-theoretic criterion for \mbox{(non-)causal} correlations is impossible.
As a simple example, take the causal structure depicted in Figure~\ref{fig:example}.
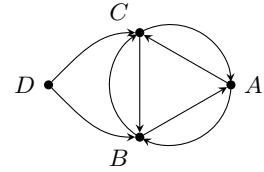
\begin{figure}
	\centering
	\begin{tikzpicture}
		\def\radius{.8}
		\def\anga{0}
		\def\angb{\anga+120+120}
		\def\angc{\anga+120}
		\def\dd{3}
		\tikzstyle{vertex} = [fill,draw,circle,minimum size=\dd,inner sep=0pt]
		\node[vertex,label=\anga:{$A$}] (A) at (\anga:\radius) {};
		\node[vertex,label=\angb:{$B$}] (B) at (\angb:\radius) {};
		\node[vertex,label=\angc:{$C$}] (C) at (\angc:\radius) {};
		\draw[-stealth] (\angb-\dd:\radius) arc(\angb-\dd:\angc+\dd:\radius);
		\draw[-stealth] (\angc-\dd:\radius) arc(\angc-\dd:\anga+\dd:\radius);
		\draw[-stealth] (\anga-\dd:\radius) arc(360+\anga-\dd:\angb+\dd:\radius);
		\draw[-stealth] (A) -- (C);
		\draw[-stealth] (C) -- (B);
		\draw[-stealth] (B) -- (A);
		\node[vertex,label=180+\anga:{$D$}] (D) at (180+\anga:2*\radius) {};
		\draw[-stealth] (D) to[out=\anga+45,in=\angc+90-30] (C);
		\draw[-stealth] (D) to[out=\anga-45,in=\angb+270+30] (B);
	\end{tikzpicture}
	\caption{Example of an admissible causal structure that does not satisfy the constraints of Theorem~\ref{thm:causal} or Theorem~\ref{thm:noncausal}.}
	\label{fig:example}
\end{figure}
If this causal structure is equipped with the model parameters from~Definition~\ref{def:modelparameters}, one obtains a consistent causal model that does {\em not\/} violate any causal inequality.
Intuitively, this follows because the role of the common parent~$D$ to the~$A,B,C$ cycle is equal to any other common parent.
So, effectively, the graph  is ``chordless.''
If, however, one uses these model parameters for the parties~$A,B,C$ only, and extends them with identity channels from~$\mathcal O_D$ to auxiliary input spaces of the parties~$B$ and~$C$,
then the resulting causal model is consistent and {\em violates\/} causal inequalities (in particular, the one of Theorem~\ref{thm:causalinequality} with~$\mathcal S=\{A,B,C\}$).
This holds because the induced graph with the nodes~$A,B,C$ is simply graph~16 in Figure~\ref{fig:admissiblethreepartystructures}.

Proving Conjecture~\ref{conj:admissible} would unlock many possibilities as it provides explicit examples of processes producing {\em non-causal\/} correlations for any number of parties and any admissible causal structure that contains a directed cycle where all  its common parents are part of.
It can be, therefore, a powerful tool to explore further non-causal correlations of different strength in the multi-party case.
Then again, Conjecture~\ref{conj:causal} would provide a better understanding of cyclic causal models that give rise to causal correlations.
The difficulty in proving this conjecture stems from the fact that the causal structure of a {\em reduced\/} quantum process might not necessarily be representable by a directed graph.

We briefly comment on how our results connect to various facets of other works.
Recently, several constructive bottom-up approaches to quantum processes with indefinite causal order have been explored (see, {\it e.g.,\/}~Refs.~\cite{wechs2021,vilasiniQPL,logicallconsistentcircuits}).
These works are contrasted by our complementary top-bottom approach.
Especially, to explore the connection to `routed circuits' and `sectorial decompositions'~\cite{vanrietvelde2020,vanrietvelde2021,ormrod2022} might be insightful, also in light of proving the conjectures presented.
Gogioso and Pinzani~\cite{pinzani2022} study causal order in a theory-independent fashion.
Theorem~\ref{thm:inadmissible}, however, imposes a restriction on causal orders for Markovian quantum processes.
For instance, the totality of two events in an indefinite causal order is inadmissible \cite{arajo2017,barrett2021,yokojima_2021}.
It is expected that Theorem~\ref{thm:inadmissible} imposes certain constraints on the ``joint decomposition'' of Markovian quantum causal orders.
Apadula, Bisio, and Perinotti~\cite{apadula2022} derive admissibility constraints for the composition of higher-order maps:
Two higher-order maps can be composed whenever no signaling loop is created.
If one wishes to compose the result with a third higher-order map, however, a more in-depth analysis is required.
In this light, our results can be understood as such an analysis of a slice within the hierarchy of higher-order computation.
Namely, we study the admissibility of {\em simultaneous\/} composition of channels (experiments) with processes.
Recently, Eftaxias, Weilenmann, and Colbeck~\cite{eftaxias2022} characterized the set of effects in the generalized probabilistic theory of ``box world.''
It is known that this theory admits effects that are {\em not\/} wirings~\cite{short2010}, and it turns out that these non-wiring effects are classical processes.
Our graph-theoretic criteria thus specifies the most general signaling structures of these effects.
Any operator with a cyclic signaling structure without siblings must therefore be excluded as effect.

A series of open questions---apart from proving the conjectures and the precise connection to related works---emerge.
A central question is to what extent our results are theory independent.
As suggested by Conjecture~\ref{conj:admissible}, the set of admissible quantum causal structures coincides with the classical one.
Does this potential coincidence extend to other theories?
Another question is how to embed admissible causal structures in space-time geometries of general relativity and quantum gravity~\cite{baumeler2019,tobar2020}, and how they can be achieved in a time-delocalized formulation~\cite{wechs2022} (note that some {\em fine-tuned\/} cyclic causal structures are known to be embeddable in Minkowski space-time~\cite{vilasini2022}).
Finally, this limitation on causal connections might unlock new information-processing protocols, {\it e.g.,\/}~in the presence of local quantum theory and classical communication without causal order~\cite{kunjwal2022}.

\noindent
{\bf Acknowledgments.}
We thank Luca Apadula, Marios Christodoulou, Timothée Hoffreumon, Ravi Kunjwal, Ognyan Oreshkov, Augustin Vanrietvelde, Vilasini Venkatesh, and the YIRG for fruitful discussions and helpful comments.
We thank two anonymous referees for helpful comments.
EET would like to thank Alexandra Elbakyan for providing access to the scientific literature.
We acknowledge support from the Austrian Science Fund (FWF) through ZK3 (Zukunftskolleg). ÄB also acknowledges support from the Austrian Science Fund (FWF) through BeyondC-F7103.
\"AB also acknowledges support from the Swiss National Science Foundation (SNF) through project~214808.

\bibliography{references.bib}

\end{document}